\def\R{\mathbb{R}}
\def\P{{\mathcal P}}
\def\Q{{\mathcal Q}}
\def\sA{{\mathsf A}}
\def\sH{{\mathsf H}}
\def\sM{{\mathsf M}}
\def\sX{{\mathsf X}}
\def\sZ{{\mathsf Z}}
\def\heta{\hat{\eta}}
\def\hX{\hat{X}}
\def\hx{\hat{x}}
\theoremstyle{plain}
\newtheorem{thm}{Theorem}
\newtheorem{lem}{Lemma}
\newtheorem{proposition}{Proposition}
\theoremstyle{definition}
\newtheorem{defn}{Definition}
\theoremstyle{remark}
\renewcommand{\IEEEQED}{\IEEEQEDopen}
\title{Optimal Zero Delay Coding of Markov Sources: Stationary  and
  Finite Memory Codes}  
\author{Richard G. Wood, Tam\'as Linder, and Serdar Y\"uksel
\thanks{The authors are with the Department of Mathematics and
    Statistics, Queen's University, Kingston, Ontario, Canada, K7L
    3N6.  Email: richard.wood@queensu.ca, linder@mast.queensu.ca, yuksel@mast.queensu.ca}
\thanks{This research was
    partially supported by the Natural Sciences and Engineering
    Research Council of Canada (NSERC).}
\thanks{The material in this paper was presented in part at the
IEEE International Symposium on Information Theory, Hong Kong, Jun.\ 2015.}
}
\begin{document}

\maketitle

\begin{abstract}
  The optimal zero delay coding of a finite state Markov source is
  considered. The existence and structure of optimal codes are studied using a
  stochastic control formulation. Prior results in the literature established
  the optimality of deterministic Markov (Walrand-Varaiya type) coding policies
  for the finite time horizon problem, and the optimality of both deterministic
  nonstationary and randomized stationary policies for the infinite time horizon
  problem. Our main result here shows that for any irreducible and aperiodic
  Markov source with a finite alphabet, \emph{deterministic and stationary}
  Markov coding policies are optimal for the infinite horizon problem. In
  addition, the finite blocklength (time horizon) performance on an optimal
  (stationary and Markov) coding policy is shown to approach the infinite time
  horizon optimum at a rate $O(1/T)$.  The results are extended to systems where
  zero delay communication takes place across a noisy channel with noiseless
  feedback.
\end{abstract}

\emph{Keywords:} Zero delay source coding, real time coding, causal coding, quantization,
stochastic control, Markov sources,  Markov decision processes. 

\section{Introduction}

This paper is concerned with optimal zero delay coding of Markov sources for
infinite time horizons. Zero delay coding is a variant of the original lossy source
coding problem introduced by Shannon \cite{shannon1948}.

\subsection{Block Coding and Zero Delay Coding}\label{sec:problemdefinition}

Recall  Shannon's lossy source coding problem \cite{Cover}:
Given is an $\sX$-valued information source $\{X_t\}_{t\ge 0}$, where we assume
 that $\sX$ is a finite set. An encoder compresses the source at a rate $R$ bits
 per source symbol. A
decoder reproduces the information source via the sequence $\{\hX_t\}_{t\ge 0}$
of $\hat{\sX}$-valued random variables, where  $\hat{\sX}$ is also a finite set.   One is
typically concerned with the transmission \emph{rate} 
and the \emph{distortion} of the system.

In particular, a \emph{$(2^{RT},T)$-rate distortion block code} \cite{Cover}
encodes $T$ source symbols $X_{[0,T-1]}\coloneqq (X_0,\dots,X_{T-1})$ at a time,
and comprises an encoding function $\eta^T: \sX^T \to \{1,\dots,2^{RT}\}$ and
a decoding function $\gamma^T  : \{1,\dots,2^{RT}\} \to \hat{\sX}^T$.
This code has  rate $R$ bits per source symbol, and (expected) distortion given by
\begin{equation*}
D_T \coloneqq \frac{1}{T} E \left[ \sum_{t=0}^{T-1} d\left(X_t, \hX_t\right) \right],
\end{equation*}
where $(\hX_0,\dots,\hX_{T-1}) = \gamma^T(\eta^T(X_{[0,T-1]}))$ and  
$d:\sX \times\hat{\sX} \to [0,\infty)$ is a so called single letter distortion
measure.

A rate distortion pair $(R,D)$ is said to be \emph{achievable} if there exists a sequence of $(2^{RT},T)$-rate distortion codes $(\eta^T,\gamma^T)$ such that
\begin{equation}
\label{eq:ach}
\limsup_{T \to \infty} D_T \leq D.
\end{equation}

By a classical result, if the source is stationary and ergodic, the minimum
achievable distortion for rate $R$ is given by the distortion rate function of
the source
\begin{equation}
  \label{eq:r-d}
D(R)  = \lim_{T\to \infty} D_T(R),
\end{equation}
where $D_T(R)$ is the $T$th order distortion rate function which can be
calculated from the statistics of the block $X_{[0,T-1]}$ (see, e.g.,
\cite{ber71}).

As is evident from the definition of block codes, such a coding scheme relies on
encoding blocks of data $(X_0,\dots,X_{T-1})$ together, which may not be
practical for many applications as the encoder has to wait until it has all $T$
source symbols before it can start encoding and transmitting the data.  In
\emph{zero delay source coding}, the encoder can produce the code of $\hX_t$ as
soon as the source symbol $X_t$ is available. Such coding schemes have many
practical applications in emerging fields such as networked control systems (see
\cite{YukselBasarBook} and references therein for an extensive review and
discussion of applications), real-time mobile audio-video systems (as in
streaming systems \cite{Draper} \cite{Khisti}), and real-time sensor networks
\cite{Akyildiz}, among other areas.

In this paper, we consider a zero delay (sequential) encoding problem where the
goal is to encode an observed information source without delay. It is assumed
that the information source $\{X_t\}_{t\ge 0}$ is an $\sX$-valued discrete time
Markov process, where $\sX$ is a finite set. The transition probability matrix
$P$ and initial distribution $\pi_0$ for $X_0$ completely determine the process
distribution, so we will use the shorthand $\{X_t\}\sim (\pi_0,P)$.  The encoder
encodes (quantizes) the source samples and transmits the encoded versions to a
receiver over a discrete noiseless channel with common input and output alphabet
$\sM\coloneqq\{1,2,\ldots,M\}$, where $M$ is a positive integer.

In the following, we build on the notation  in 
\cite{YukLinZeroDelay}. Formally, the encoder is specified by a
\emph{quantization policy} $\Pi$, which is a sequence of functions
$\{\eta_t\}_{t\ge 0}$ with $\eta_t: \sM^t \times \sX^{t+1} \to \sM$.  At
time $t$, the encoder transmits the $\sM$-valued message
\[
  q_t=\eta_t(I_t)
\]
with $I_0=X_0$, $I_t=( q_{[0,t-1]} ,X_{[0,t]})$ for $t \geq 1$, where we have
used the notation $q_{[0,t-1]}=(q_0,\ldots,q_{t-1})$ and $X_{[0,t]} =
(X_0,\ldots,X_t)$.  The collection of all such zero delay encoding policies
is called the set of \emph{admissible} quantization policies and is denoted by $\Pi_A$.

Observe that for fixed $q_{[0,t-1]}$ and $X_{[0,t-1]}$, as a function of $X_t$,
the encoder $\eta_t(q_{[0,t-1]},X_{[0,t-1]},\,\cdot\,)$ is a \emph{quantizer},
i.e., a mapping of $\sX$ into the finite set~$\sM$. Thus a quantization 
policy at each  time index $t$ selects a quantizer $Q_t:\sX \to \sM$ based
on past information $(q_{[0,t-1]},X_{[0,t-1]})$, and then ``quantizes'' $X_t$ as
$q_t=Q_t(X_t)$.

Upon receiving $q_t$, the decoder
generates the reconstruction $\hX_t$, also without delay. A zero delay 
decoder policy is a sequence of functions
$\gamma=\{\gamma_t\}_{t\ge 0}$ of type  $\gamma_t : \sM^{t+1} \to
\hat{\sX}$, where  $\hat{\sX}$ denotes the finite reconstruction alphabet.  Thus
for all $t\ge0$,
\[
\hX_t=\gamma_t(q_{[0,t]}). 
\]

For the finite horizon (blocklength) setting the goal is to minimize the average
cumulative distortion (cost)
\begin{equation}\label{Cost1}
E^{\Pi,\gamma}_{\pi_0}\biggl[\frac{1}{T} \sum_{t=0}^{T-1} d(X_t,\hX_t)\biggr]
\end{equation}
for some $T \ge 1$, where $d: \sX \times \hat{\sX} \to [0,\infty)$ is a cost (distortion) function and $E^{\Pi,\gamma}_{\pi_0}$
denotes expectation with initial distribution $\pi_0$ for $X_0$ and under the
quantization policy $\Pi$ and receiver policy $\gamma$. We assume that
the encoder and decoder know the initial distribution $\pi_0$.

Since the source alphabet is finite, for any encoder policy $\Pi\in \Pi_A$ and
any $t\ge 0$, there always exists
an optimal receiver policy $\gamma^*=\gamma^*(\Pi)$ such that  for all
  $t\ge 0$,
\[
  E^{\Pi,\gamma^*}_{\pi_0}\bigl[d(X_t,\hX_t)\bigr] =
  \inf_{\gamma} E^{\Pi,\gamma}_{\pi_0}\bigl[ d(X_t,\hX_t)\bigr].
\]
From now on, we always assume that an optimal receiver policy is used for a given
encoder policy and, with an abuse of notation,  $\Pi\in \Pi_A$ will mean the
combined  encoder and decoder policies $(\Pi,\gamma^*(\Pi))$. Using this new
notation, we have for all $t\ge 0$, 
\[
   E^{\Pi}_{\pi_0}\bigl[d(X_t,\hX_t)\bigr]  =
  \inf_{\gamma} E^{\Pi,\gamma}_{\pi_0}\bigl[ d(X_t,\hX_t)\bigr].
\]

In this paper, we concentrate on the following performance criteria.

\begin{enumerate}

\item \emph{Infinite Horizon Discounted Cost Problem}:
In the infinite horizon discounted cost problem, the goal is to minimize the
cumulative ``discounted'' cost 
\begin{equation}\label{cost22}
J^{\beta}_{\pi_0}(\Pi) \coloneqq \lim_{T \to \infty} E^{\Pi}_{\pi_0}\left[\, \sum_{t=0}^{T-1} \beta^t d(X_t,\hX_t)\right]
\end{equation}
for some $\beta \in (0,1)$.

\item \emph{Infinite Horizon Average Cost Problem}:
The more challenging infinite horizon average cost problem has the objective of
minimizing the long term average distortion 
\begin{equation}
J_{\pi_0}(\Pi) \coloneqq  \limsup_{T \to \infty}
E^{\Pi}_{\pi_0}\left[\frac{1}{T}   \sum_{t=0}^{T-1}
  d(X_t,\hX_t)\right]. \label{infiniteCost}
\end{equation}

\end{enumerate}

We note that in source coding only the average cost problem is of interest, but
we also consider the discounted cost problem since it will serve as a
useful tool in studying the more difficult average cost problem. 

Observe that $R=\log_2 M$ is the rate of the described zero delay codes. Then,
in analogy to \eqref{eq:ach}, the rate distortion pair $(R,D)$ is said to be
achievable if there exists a policy $\Pi$ such that $J_{\pi_0}(\Pi) \le D$. As
opposed to the block coding case, finding the minimum achievable distortion
(cost) $\min_{\Pi\in \Pi_A} J_{\pi_0}(\Pi)$ at rate $R$ for zero delay codes is
an open problem. In particular, if the source is stationary and memoryless, then
this minimum is equal to $\min_f E[d(X_0,f(X_0))]$, where the minimum is taken
over all ``memoryless quantizers'' $f:\sX\to \hat{\sX}$ with $|f(\sX)| \le
2^{R}$ \cite{Eri79,GaSl79,GaSl82}. However, this optimum performance is not
known for any other (more general) source classes, and in particular it is
unknown when $\{X_t\}$ is a stationary and ergodic Markov source.  (Some partial
results on this problem are given in, e.g., \cite{GaSl82,GaGy86}.) 

Our main goal in this paper is to characterize some important properties of
optimal coding policies that achieve this minimum, even though we cannot
characterize the value of the minimum.

We review  two results fundamental to the structure of optimal zero delay codes
(see also \cite{YukIT2010arXiv}).

\begin{thm}[Witsenhausen \cite{Witsenhausen}] \label{witsenhausenTheorem} For
  the problem of coding a Markov source over a finite time horizon $T$, any zero
  delay quantization policy $\Pi=\{\eta_t\}$ can be replaced, without loss in
  distortion performance, by a policy $\hat{\Pi}=\{\heta_t\}$ which only uses
  $q_{[0,t-1]}$ and $X_t$ to generate $q_t$, i.e., such that
  $q_t=\heta_t(q_{[0,t-1]},X_t)$ for all $t=1,\ldots,T-1$.
\end{thm}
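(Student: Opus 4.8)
The plan is to prove the theorem by backward induction on the time index, modifying the encoding functions $\eta_{T-1},\eta_{T-2},\dots,\eta_1$ one at a time. Since the class $\Pi_A$ is understood to come with optimal decoders, and replacing a decoder by an optimal one can only decrease the distortion, it suffices to fix the optimal decoder $\gamma=\{\gamma_t\}$ associated with the original $\Pi$ and to exhibit an encoder $\hat\Pi=\{\heta_t\}$ of the claimed form with
\[
  E^{\hat\Pi,\gamma}_{\pi_0}\!\left[\frac1T\sum_{t=0}^{T-1} d(X_t,\hX_t)\right]\ \le\ E^{\Pi,\gamma}_{\pi_0}\!\left[\frac1T\sum_{t=0}^{T-1} d(X_t,\hX_t)\right].
\]
The base of the induction, $q_0=\eta_0(X_0)$, is already of the required form.

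For the induction step, fix $t\in\{1,\dots,T-1\}$ and assume $\eta_s$ has already been replaced by a function of $(q_{[0,s-1]},X_s)$ for every $s>t$. The output $q_t$ affects only the reconstructions $\hX_s=\gamma_s(q_{[0,s]})$ with $s\ge t$; hence the costs $d(X_s,\hX_s)$ for $s<t$, as well as the joint distribution of $(X_{[0,t]},q_{[0,t-1]})$ (which is generated by the untouched source and by $\eta_0,\dots,\eta_{t-1}$), do not depend on the choice of $\eta_t$. It is therefore enough to choose $q_t$, as a function of the information $(q_{[0,t-1]},X_{[0,t]})$ available at time $t$, so as to minimize the conditional expected tail cost
\[
  \Phi_t\ \coloneqq\ E^{\Pi,\gamma}_{\pi_0}\!\left[\sum_{s=t}^{T-1} d(X_s,\hX_s)\ \Big|\ q_{[0,t-1]},X_{[0,t]},q_t\right].
\]

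The crux is that $\Phi_t$ is in fact a function of $(q_{[0,t-1]},X_t,q_t)$ only. Indeed, by the induction hypothesis $q_s=\heta_s(q_{[0,s-1]},X_s)$ for $t<s\le T-1$, and $\hX_s=\gamma_s(q_{[0,s]})$ for $t\le s\le T-1$; unrolling these recursions shows that $\sum_{s=t}^{T-1} d(X_s,\hX_s)$ is a deterministic function of $(q_{[0,t-1]},q_t,X_{[t,T-1]})$. Moreover $q_{[0,t-1]}$ is a deterministic function of $X_{[0,t-1]}$ (through $\eta_0,\dots,\eta_{t-1}$), so conditioning on $(q_{[0,t-1]},X_{[0,t]})$ amounts to conditioning on $X_{[0,t]}$, and the only randomness remaining in $\Phi_t$ is that of $X_{[t+1,T-1]}$. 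Here the Markov property enters: $P(X_{[t+1,T-1]}\in\cdot\mid X_{[0,t]})=P(X_{[t+1,T-1]}\in\cdot\mid X_t)$, whence $\Phi_t=\varphi_t(q_{[0,t-1]},X_t,q_t)$ for some deterministic $\varphi_t$. Letting $\heta_t(q_{[0,t-1]},X_t)$ be any minimizer of $q\mapsto\varphi_t(q_{[0,t-1]},X_t,q)$ over $\sM$ and using it to generate $q_t$ does not increase $E^{\Pi,\gamma}_{\pi_0}[\sum_{s=t}^{T-1}d(X_s,\hX_s)]$ and, as noted above, leaves the earlier costs unchanged, so the total distortion does not increase. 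Iterating from $t=T-1$ down to $t=1$ yields the desired $\hat\Pi$. I expect the main obstacle to be exactly this crux step: one must keep careful track of which variables $q_t$ and the future reconstructions actually depend on, so that the already-restricted future encoders together with the source's Markov property collapse the apparent sufficient statistic $(q_{[0,t-1]},X_{[0,t]})$ down to $(q_{[0,t-1]},X_t)$; no hard analytic estimate is involved.
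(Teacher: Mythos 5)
The paper does not include its own proof of this result---Theorem~\ref{witsenhausenTheorem} is stated as a known structural result and cited directly to Witsenhausen \cite{Witsenhausen}---so there is no in-paper argument to compare against. Your backward-induction proof is correct and is essentially the standard (Witsenhausen-style) dynamic-programming argument for this theorem. The two places a reader might probe are both handled: (a) the reduction to a fixed decoder is sound, since you use the decoder $\gamma^*(\Pi)$ optimal for the original $\Pi$, and the optimal decoder for $\hat\Pi$ can only do better; (b) the crux step correctly identifies that, once $\heta_s$ for $s>t$ depend only on $(q_{[0,s-1]},X_s)$, the tail cost is a deterministic function of $(q_{[0,t-1]},q_t,X_{[t,T-1]})$, so after taking the conditional expectation over $X_{[t+1,T-1]}$ (which by the Markov property depends on the past only through $X_t$), the cost-to-go collapses to $\varphi_t(q_{[0,t-1]},X_t,q_t)$; choosing $q_t$ to minimize this over the finite set $\sM$ is well defined and measurable. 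One minor remark: you should perhaps make explicit that at the intermediate stage of the induction the variables $q_{[0,t-1]}$ are still generated by the original $\eta_0,\dots,\eta_{t-1}$, so their joint law with $X_{[0,t]}$ is unchanged when $\eta_t$ is replaced---you state this, but it is the load-bearing observation that makes the one-at-a-time replacement legitimate. As written the argument is complete.
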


Let ${\cal P}(\sX)$ denote the space of probability measures on $\sX$. Given a
quantization policy $\Pi$, for all $t\ge 1$ let $\pi_t \in {\cal P}(\sX)$ be the
conditional probability defined by
 \[
\pi_t(A)\coloneqq \Pr(X_t\in A | q_{[0,t-1]})
\]
for any set $A\subset \sX$.

\begin{thm}[Walrand and Varaiya
  \cite{WalrandVaraiya}]\label{WalrandVaraiyaTheorem} For the problem of coding 
  a Markov source over a finite time horizon $T$, any zero delay quantization policy
  can be replaced, without loss in performance, by a policy which at any time
  $t= 1, \ldots,T-1$ only uses the conditional probability measure
  $\pi_t=P(dx_t|q_{[0,t-1]})$ and the state $X_t$ to generate $q_t$. In other
  words, at time $t$ such a policy $\heta_t$ uses $\pi_t$ to select a quantizer
  $Q_t=\heta(\pi_t)$ (where $Q_t:\sX \to \sM$), and then $q_t$ is generated as
  $q_t=Q_t(x_t)$.
\end{thm}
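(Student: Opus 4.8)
The plan is to build on Witsenhausen's reduction (Theorem~\ref{witsenhausenTheorem}) and then recast the finite-horizon coding problem as a fully observed Markov decision process (MDP) whose state is the conditional distribution $\pi_t$ and whose control action is the quantizer $Q_t$. By Theorem~\ref{witsenhausenTheorem} I would assume without loss that $q_t=\heta_t(q_{[0,t-1]},X_t)$, which is equivalent to saying that at each time $t$ the encoder chooses a quantizer $Q_t\colon\sX\to\sM$ as a deterministic function of $q_{[0,t-1]}$ and then emits $q_t=Q_t(X_t)$; since $\sX$ and $\sM$ are finite, the set $\Q$ of all such quantizers is finite.

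The first substantive step is to show that $\{\pi_t\}$ is a controlled Markov chain driven by $\{Q_t\}$. By Bayes' rule, conditioning on the event $q_t=Q_t(X_t)$ restricts $\pi_t$ to the cell $Q_t^{-1}(q_t)$ and renormalizes it, and applying one step of the source transition matrix $P$ to the resulting posterior yields a deterministic map $F$ with $\pi_{t+1}=F(\pi_t,Q_t,q_t)$; here I would use that, by the Markov property of the source, $X_{t+1}$ is conditionally independent of $q_{[0,t]}$ given $X_t$ (because $q_{[0,t]}$ is a deterministic function of $X_{[0,t]}$). Moreover, since $\Pr(X_t\in\cdot\,|\,q_{[0,t-1]})=\pi_t$ and $Q_t$ is $\sigma(q_{[0,t-1]})$-measurable, the conditional law $\Pr(q_t=i\,|\,q_{[0,t-1]})=\pi_t(Q_t^{-1}(i))$ depends on $q_{[0,t-1]}$ only through the pair $(\pi_t,Q_t)$. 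Combining the two observations, $\Pr(\pi_{t+1}\in\cdot\,|\,q_{[0,t-1]})$ is a fixed stochastic kernel on $\P(\sX)$ evaluated at $(\pi_t,Q_t)$, so the evolution of $\pi_t$ depends on the past only through the current state and action.

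Next I would express the per-stage distortion in terms of $(\pi_t,Q_t)$. With an optimal receiver in force, $\Pr(X_t\in\cdot\,|\,q_{[0,t]})$ is the normalized restriction of $\pi_t$ to $Q_t^{-1}(q_t)$, so $\hX_t=\gamma_t^*(q_{[0,t]})$ achieves a conditional cost $\min_{\hx}\sum_x\Pr(X_t=x\,|\,q_{[0,t]})\,d(x,\hx)$ that depends only on $(\pi_t,Q_t,q_t)$; averaging this over $q_t$ (whose conditional law given $q_{[0,t-1]}$ was identified above) gives $E^{\Pi}_{\pi_0}[\,d(X_t,\hX_t)\mid q_{[0,t-1]}]=\tilde c(\pi_t,Q_t)$, where
\[
\tilde c(\pi,Q)\coloneqq\sum_{i\in\sM}\ \min_{\hx\in\hat{\sX}}\ \sum_{x\in Q^{-1}(i)}\pi(\{x\})\,d(x,\hx)
\]
is a fixed measurable function of $(\pi,Q)$. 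Hence the horizon-$T$ objective equals $E^{\Pi}_{\pi_0}\bigl[\sum_{t=0}^{T-1}\tilde c(\pi_t,Q_t)\bigr]$ (and likewise with a discount $\beta^t$), which is exactly the cost of a finite-horizon MDP with fully observed state $\pi_t\in\P(\sX)$ --- which the encoder computes recursively from $\pi_0$ and $q_{[0,t-1]}$ --- finite action set $\Q$, the transition kernel above, and bounded stage cost $\tilde c$. I would then finish via the standard backward dynamic programming recursion for finite-horizon MDPs: the value functions satisfy a recursion in which the minimum over $Q\in\Q$ is attained, producing an optimal policy under which $Q_t$ is a deterministic (measurable) function of $\pi_t$ alone, and since the value of this Markov policy lower-bounds the cost of every policy in the Witsenhausen-reduced class, any quantization policy can be replaced, without loss, by one of the claimed form $Q_t=\heta(\pi_t)$, $q_t=Q_t(X_t)$.

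I expect the main obstacle to be the controlled-Markov verification: carefully checking that neither the update of $\pi_{t+1}$ nor the stage cost retains any dependence on $q_{[0,t-1]}$ beyond $(\pi_t,Q_t)$, and --- since $\pi_t$ is itself defined relative to a fixed policy --- arguing that adopting $\pi_t$ as the MDP state genuinely loses no information, so that optimizing over Markov (Walrand--Varaiya type) policies attains the infimum over all admissible policies. The remaining ingredients --- measurability of $F$ and $\tilde c$ on $\P(\sX)$, well-posedness of the dynamic programming recursion, existence of a measurable minimizing selection, and the degenerate $t=0$ case where $Q_0$ is simply a fixed quantizer tuned to the known $\pi_0$ --- are routine given that $\sX$, $\sM$, and $\Q$ are finite.
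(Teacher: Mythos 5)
The paper cites this theorem from Walrand and Varaiya without reproving it, and your plan --- Witsenhausen reduction, reformulation as a belief-state MDP with state $\pi_t$ and action $Q_t$, filtering update, per-stage cost $\tilde c(\pi,Q)$ (which is exactly the paper's $c(\pi,Q)$ in \eqref{eq_cdef}), then backward dynamic programming --- is the canonical argument and matches precisely the controlled-Markov framework the paper itself develops around the theorem (the filtering recursion \eqref{filtre} and Proposition~\ref{MeasurableSelectionApplies}). The sufficiency step you correctly flag as the main obstacle is resolved by the backward induction itself: with $V_T\equiv 0$, the Bellman operand at stage $t$ is $c(\pi_t,Q)+\sum_{i\in\sM}\pi_t\bigl(Q^{-1}(i)\bigr)\,V_{t+1}\bigl(F(\pi_t,Q,i)\bigr)$, which manifestly factors through the pair $(\pi_t,Q)$ once $V_{t+1}$ is a function of $\pi_{t+1}$ alone; hence the minimizer can be taken to be a function of $\pi_t$, and the apparent circularity (that $\pi_t$ is defined relative to the past policy) is harmless because at stage $t$ of the backward sweep the encoder maps on times $0,\dots,t-1$ are arbitrary but held fixed --- making the map $q_{[0,t-1]}\mapsto\pi_t$ well defined --- while only the continuation from time $t$ onward is optimized and shown to be $\pi_t$-measurable. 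With that step spelled out, your outline is a complete proof.
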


As discussed in \cite{YukIT2010arXiv}, the main difference between the two
structural results above is the following: in the setup of
Theorem~\ref{witsenhausenTheorem}, the encoder's memory space is not fixed and
keeps expanding as the encoding block length $T$ increases. In the setup of
Theorem~\ref{WalrandVaraiyaTheorem}, the memory space $\mathcal{P}(\sX)$ of an
optimal encoder is fixed (but of course is not finite). More importantly, the
setup of Theorem~\ref{WalrandVaraiyaTheorem} allows one to apply the powerful
theory of Markov decision processes on fixed state and action spaces, thus
greatly facilitating the analysis.

Recall that a Markov chain $\{X_t\}$ with finite state space $\sX$ is
\emph{irreducible} if for any $a,b \in \sX$ there exists a positive $n$ such
that $\Pr(X_n=b|X_0=a)>0$ (e.g., \cite[Chapter~1.2]{Nor97}), and it is
\emph{aperiodic} if for each state $a\in \sX$ there is a positive $n$ such that
$\Pr(X_{n'}=a | X_0 = a) >0$ for all $n' \geq n$ (e.g.,
\cite[Chapter~1.8]{Nor97}). Our assumption on the source $\{X_t\}$ is that it is
an irreducible and aperiodic finite state Markov chain.

The main  results in this paper are the following. 

\begin{itemize}
\item For the problem of zero delay source coding of an irreducible and
  aperiodic Markov source over an infinite time horizon we show the optimality
  (among all admissible policies) of {\it deterministic} and {\it stationary}
  (i.e., time invariant)  Markov (Walrand-Varaiya type) policies for both
  stationary and nonstationary Markov sources. 

\item For the same class of Markov sources, we show  that the
  optimum performance for time horizon $T$ converges to the optimum infinite
  horizon performance at least as fast as $O\big(\frac{1}{T}\big)$. 

\item Using the above convergence rate result, for 
  stationary Markov sources we also show the existence of $\epsilon$-optimal
  periodic zero delay codes with an explicit bound on the relationship between
  $\epsilon$ and the period length. This result is relevant since the complexity
  of the code is directly related to the length of the period (memory size).

\end{itemize}

The rest of the paper is organized as follows. In the next subsection we review
some existing results on zero delay coding and related problems.  In
Section~\ref{chap:finite} we derive auxiliary results and show that stationary
Walrand-Varaiya type policies are optimal in the set of all policies for the
infinite horizon discounted cost problem.  In Section~\ref{chap:avgcost} we
consider the infinite horizon average cost problem and prove the optimality of
stationary and deterministic Walrand-Varaiya type policies. The convergence rate
result and the $\epsilon$-optimality of finite memory policies are also
presented here.  Section~\ref{secnoisy} describes the extension of these results
for zero delay coding over a noisy channel with feedback.  Concluding remarks
are given in Section \ref{chap:conclu}.  In the Appendix we provide a brief
summary of some definitions and results we need from the theory of Markov
decision processes.

\subsection{Literature Review}
\label{sec_structure}

Structural results for the finite horizon coding problem have been developed in
a number of important papers. As mentioned before, the classic works by
Witsenhausen \cite{Witsenhausen} and Walrand and Varaiya \cite{WalrandVaraiya},
which use two different approaches, are of particular relevance.  An extension
to the more general setting of non feedback communication was given  by
Teneketzis \cite{Teneketzis}, and \cite{YukIT2010arXiv} also extended these
results to more general state spaces; see also \cite{YukLinZeroDelay} and
\cite{YukselBasarBook} for a more detailed review.

A related lossy coding procedure was introduced by Neuhoff and Gilbert
\cite{NeuhoffGilbert}, which they called {\it causal source coding}. The main
result in \cite{NeuhoffGilbert} established that for stationary memoryless
sources, an optimal causal coder can either be replaced by one that time shares
two memoryless coders, without loss in performance.  As noted in
\cite{NeuhoffGilbert}, zero delay codes form a special subclass of causal codes.
We also note that \emph{scalar quantization} is a practical (but in general
suboptimal) method for zero delay coding of continuous sources. A detailed
review of classical results on scalar and vector quantization is given in
\cite{GrNe98}.

Causal coding under a high rate assumption for stationary sources and individual
sequences was studied in \cite{LinderZamir}.  Borkar et al.\
\cite{BorkarMitterTatikonda} studied the related problem of coding a partially
observed Markov source and obtained existence results for dynamic vector
quantizers in the infinite horizon setting.  It should be noted that in
\cite{BorkarMitterTatikonda} the set of admissible quantizers was restricted to
the set of nearest neighbor quantizers, and other conditions were placed on the
dynamics of the system; furthermore the proof technique used in
\cite{BorkarMitterTatikonda} relies on the fact that the source is partially
observed unlike the setup we consider here.

In \cite{YukLinZeroDelay}, zero delay coding of $\R^d$-valued Markov sources was
considered. In particular, \cite{YukLinZeroDelay} established the existence of
optimal quantizers (having convex codecells) for finite horizons and the
existence of optimal {\it deterministic nonstationary} or {\it randomized
  stationary} policies for stationary Markov sources over infinite horizons, but 
the optimality of {\it stationary and deterministic} codes  was left as an open
problem.  Related work include \cite{AsnaniWeissman} which considered the coding
of discrete independent and identically distributed (i.i.d$.$) sources with
limited lookahead using the average cost optimality equation.  Also,
\cite{javidi2013dynamic} studied real time joint source-channel coding of a
discrete Markov source over a discrete memoryless channel with feedback under a
similar average cost formulation.

Some partial, but interesting results on the optimum performance of zero-delay
coding over a noisy channel are available in the literature. It is shown in
\cite[Theorem 3]{WalrandVaraiya} that when the source and the channel alphabets
have the same cardinality and the channel satisfies certain symmetry conditions
(e.g., the channel is the binary symmetric channel or a binary erasure channel),
then memoryless encoding is optimal for any Markov source. Also, an information
theoretic source-channel {\it matching} type argument can be made for special
scenarios where the sequential rate-distortion \cite{TatikondaThesis}
\cite{TatikondaSahaiMitter} achieving channel kernels are realized with the
physical channel itself, a crucial case being the scalar Gaussian source
transmitted over a scalar Gaussian channel under power constraints at the
encoder \cite{banbas89}. Along this direction, a more modern treatment and
further results are given in \cite{charalambous2014nonanticipative} and
\cite{KoChBo15}. Optimal zero delay coding of Markov sources over noisy channels
without feedback was considered in \cite{Teneketzis} and \cite{MahTen09}.

In this paper we  also investigate how fast the optimum finite
blocklength (time horizon) distortion converges to the optimum
(infinite horizon) distortion. An analog of this problem in block coding is the
speed of convergence of the finite block length encoding performance to
Shannon's distortion rate function. For stationary and memoryless sources, this
speed of convergence was shown to be of the type $O\big(\frac{\log T}{T}\big)$
\cite{pilc1967coding}, \cite{zhang1997redundancy}. See also
\cite{kostina2012fixed} for a detailed literature review and further finite
blocklength performance bounds.

Finally, we note that control theoretic tools are playing an increasingly
important role in solving certain types of problems in information
theory. Several of the papers cited above use dynamic programming as a crucial
tool to analyze an average cost optimal control problem that the given
information theoretic problem is reduced to. To facilitate this analysis, the
\emph{convex analytic method} \cite{BorkarConv} was used, e.g., in
\cite{YukLinZeroDelay} and \cite{ComYukTak09}, while in
\cite{BorkarMitterTatikonda}, \cite{PCRWIT08}, \cite{PWGIT09}, \cite{TakMit09},
\cite{AsnaniWeissman}, \cite{SPK1}, and \cite{SPK2} the \emph{average cost
  optimality equation} was used (typically through the \emph{vanishing discount
  method}). In particular, \cite{PCRWIT08}, \cite{SPK1}, and \cite{SPK2} use
this latter approach to solve dynamic programs that provide \emph{explicit}
channel capacity expressions.  In this paper (unlike in our earlier work
\cite{YukLinZeroDelay}), we also use the average cost optimality equation
approach, but here certain technical subtleties complicate the analysis: (i) the
structural result (on the optimality of Walrand-Varaiya type policies) only
holds for finite horizon problems; and (ii) we have a controlled Markov chain
(where the beliefs are the states and the quantizer maps are the actions) only
when the quantizers belong to the Walrand-Varaiya class (see
Definition~\ref{WVdef}). Much of our technical analysis concerns extending this
line of argument to the infinite horizon case through the study of recurrence,
coupling, convergence, and continuity properties of the underlying controlled
Markov chain.

\section{The Finite Horizon Average Distortion  and The Infinite Horizon
  Discounted Distortion Problems}

\label{chap:finite}

\subsection{The Finite Horizon Average Cost Problem}

In view of Theorem~\ref{WalrandVaraiyaTheorem}, for a finite horizon problem any
admissible (i.e., zero delay) quantization policy can be replaced by a
Walrand-Varaiya type policy. Using the terminology of Markov decision processes,
we will also refer to such policies as Markov policies. The class of all such
policies is denoted by $\Pi_W$, and is formally defined below.

\begin{defn} \label{WVdef} Let $\Q$ denote the set of all quantizers $Q:\sX\to
  \sM$. An (admissible) quantization policy $\Pi=\{\eta_t\}$ belongs to $\Pi_W$
  if there exists a sequence of mappings $\{\heta_t\}$ of the type $\heta_t:
  \P(\sX) \to \Q$ such that for $Q_t=\heta_t(\pi_t)$ we have
  $q_t=Q_t(X_t)=\eta_t(I_t)$. A policy in $\Pi_W$ is called \emph{stationary} if
  $\heta_t$ does not depend on $t$.  The set of such stationary policies is
  denoted by $\Pi_{WS}$.
\end{defn}

\noindent\emph{Remark}.   It is worth pointing out that the classical definition of a stationary (time
invariant or sliding block \cite{Gra11}) encoder involves a ``two sided''
infinite source sequence $\{X_t\}_{t=-\infty}^{\infty}$ and has the form
$q_t=g(X_{[-\infty,t]})$ for all $t$, where $g$ maps the infinite past
 $X_{[-\infty,t]}= \ldots,X_{t-2},X_{t-1},X_t,$ up to time $t$ into the symbol
$q_t$. Clearly, for a ``one sided'' source $\{X_t\}_{t\ge 0}$ such a definition
of stationary codes is problematic. Thus, in a sense, stationary
Walrand-Varaiya type encoding policies give a useful generalization of
classical stationary encoders  for the case of one sided sources.

\medskip

Building on \cite{YukIT2010arXiv} and \cite{YukLinZeroDelay}, suppose a given
quantizer policy $\Pi=\{\heta_t\}$ in $\Pi_W$ is used to encode the Markov
source $\{X_t\}$.  Let $P=P(x_{t+1}|x_t)$ denote the transition kernel of the
source.  Observe that the conditional probability of $q_t$ given $\pi_t$ and
$x_t$ is given by $P(q_t | \pi_t,x_t) = 1_{\{Q_t(x_t)=q_t\}}$ with
$Q_t=\heta_t(\pi_t)$, and is therefore determined by the quantizer policy. Then
standard properties of conditional probability can be used to obtain the
following ``filtering equation'' for the evolution of $\pi_t$:
\begin{eqnarray}
\pi_{t+1}(x_{t+1})\!\!\!\! & = & \!\!\!\!  \frac{\sum_{x_t} \pi_t(x_t) P(q_t | \pi_t, x_t)
  P(x_{t+1}|x_t)}{\sum_{x_t} \sum_{x_{t+1}} \pi_t(x_t) P(q_t| \pi_t, x_t) P(x_{t+1}|x_t)} \nonumber \\
&=&  \!\!\!\!   \frac{1}{\pi_t(Q^{-1}_t(q_t))} \sum_{x_t\in Q^{-1}_t(q_t)} P(x_{t+1}|
x_t)\pi_t(x_t).  \label{filtre}
\end{eqnarray}
Therefore, given $\pi_t$ and $Q_t$, $\pi_{t+1}$ is conditionally independent of
$(\pi_{[0,t-1]},Q_{[0,t-1]})$. Thus $\{\pi_t\}$ can be viewed as a
$\P(\sX)$-valued controlled Markov process \cite{HernandezLermaMCP} with
$\Q$-valued control $\{Q_t\}$ and average cost up to time $T-1$ given by
\[
E^{\Pi}_{\pi_0}\left[\frac{1}{T}   \sum_{t=0}^{T-1} d(X_t,\hX_t)\right] =  E^{\Pi}_{\pi_0}\left[ \frac{1}{T}\sum_{t=0}^{T-1} c(\pi_t,Q_t)\right],
\]
where
\begin{equation}
\label{eq_cdef}
c(\pi_t,Q_t): = \sum_{i=1}^M \min_{\hx\in \hat{\sX}} \sum_{x \in Q_t^{-1}(i)}  \pi_t(x) d(x,\hx).
\end{equation}
In this context, $\Pi_W$ corresponds to the class of deterministic Markov
control policies \cite{HernandezLermaMCP}. The Appendix provides a brief
overview of controlled Markov  processes. 

The following statements follow from results in \cite{YukLinZeroDelay}, but they
can also be straightforwardly derived since, in contrast to 
\cite{YukLinZeroDelay}, here we have only finitely many $M$-cell quantizers on
$\sX$. For any $\Pi\in \Pi_A$, define
\[
J_{\pi_0}(\Pi,T) \coloneqq E^{\Pi}_{\pi_0}\left[\frac{1}{T}   \sum_{t=0}^{T-1}
  d(X_t,\hX_t)\right]  .
\]

\begin{proposition}\label{MeasurableSelectionApplies}  For
  any $T\ge 1$, there exists a policy $\Pi$ in $\Pi_W$ such
  that
\begin{equation}
\label{eq_opt}
J_{\pi_0}(\Pi,T) = \inf_{\Pi' \in \Pi_A} J_{\pi_0}(\Pi',T).
\end{equation} 
Letting  $J^T_T(\,\cdot\,)\coloneqq0$, $J^T_0(\pi_0) \coloneqq \min_{\Pi\in \Pi_W}  J_{\pi_0}(\Pi,T)$,
the dynamic programming recursion
\begin{equation}
T J^T_t(\pi)  \nonumber = \min_{Q \in \mathcal{Q}} \bigg( c(\pi,Q) +  T E\bigl[J^T_{t+1}(\pi_{t+1})|\pi_t=\pi,Q_t=Q\bigr] \bigg)
\end{equation}
holds for all $t=T-1,T-2,\ldots,0$ and $\pi\in \P(\sX)$.
\end{proposition}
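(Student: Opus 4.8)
The plan is to establish the proposition in two parts: first the existence of an optimal $\Pi_W$ policy for each finite $T$, then the validity of the dynamic programming recursion. Throughout, the key structural fact is Theorem~\ref{WalrandVaraiyaTheorem}, which tells us that for the finite horizon problem the infimum over $\Pi_A$ equals the infimum over $\Pi_W$; combined with the observation (already made in the excerpt) that under a $\Pi_W$ policy the belief process $\{\pi_t\}$ is a controlled Markov chain on the state space $\P(\sX)$ with action space $\Q$ and one-stage cost $c(\pi,Q)$ from \eqref{eq_cdef}, this reduces the problem to a standard finite horizon Markov decision process. The only nontrivial point is that $\P(\sX)$ is uncountable, so I must verify the measurable selection / compactness hypotheses needed to guarantee that the infimum in the dynamic programming step is attained and can be realized by a measurable selector.

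First I would recall that since $\sX$ and $\sM$ are finite, the set $\Q$ of all $M$-cell quantizers $Q:\sX\to\sM$ is a \emph{finite} set; this is precisely the simplification over \cite{YukLinZeroDelay} that the excerpt points to. Next I would observe that for each fixed $Q\in\Q$, the map $\pi\mapsto c(\pi,Q)$ is continuous (indeed piecewise linear, hence Lipschitz) in $\pi$ on the simplex $\P(\sX)$, and that the transition kernel of the belief process, given by the filtering equation \eqref{filtre}, is weakly continuous in $(\pi,Q)$ — again for fixed $Q$ it is a continuous (rational, with nonvanishing denominator on the relevant event, or handled separately on the null event) function of $\pi$ into $\P(\sX)$. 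With a finite action set and a one-stage cost that is bounded (since $\sX,\hat\sX$ are finite and $d$ is finite-valued) and continuous, the standard finite horizon MDP theory (e.g.\ the measurable selection results summarized in the Appendix, or \cite{HernandezLermaMCP}) applies: proceeding by backward induction from $J^T_T\equiv 0$, at each step $t$ the function $\pi\mapsto TE[J^T_{t+1}(\pi_{t+1})\mid \pi_t=\pi,Q_t=Q]$ is measurable (in fact continuous) in $\pi$ for each $Q$, the minimum over the finite set $\Q$ is therefore attained and yields a measurable $J^T_t$, and there exists a measurable minimizing selector $\heta_t:\P(\sX)\to\Q$. Assembling these selectors over $t=0,\dots,T-1$ gives a policy $\Pi\in\Pi_W$ whose cost equals $J^T_0(\pi_0)$, and by Theorem~\ref{WalrandVaraiyaTheorem} this equals $\inf_{\Pi'\in\Pi_A}J_{\pi_0}(\Pi',T)$, establishing \eqref{eq_opt}.

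The dynamic programming recursion itself then follows from the general principle of optimality for finite horizon MDPs: defining $J^T_t(\pi)$ as $\tfrac1T$ times the optimal cost-to-go from time $t$ with belief $\pi$ over the remaining $T-t$ stages (so that $J^T_0(\pi_0)=\min_{\Pi\in\Pi_W}J_{\pi_0}(\Pi,T)$ and $J^T_T\equiv 0$ as the empty-sum convention), one shows by the usual one-step conditioning argument — splitting the cost $\tfrac1T\sum_{s=t}^{T-1}c(\pi_s,Q_s)$ into the stage-$t$ term $\tfrac1T c(\pi,Q)$ plus the remaining cost, using the Markov property of $\{\pi_t\}$ established via \eqref{filtre}, and optimizing — that $TJ^T_t(\pi)=\min_{Q\in\Q}\big(c(\pi,Q)+TE[J^T_{t+1}(\pi_{t+1})\mid\pi_t=\pi,Q_t=Q]\big)$. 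Since the per-stage normalization $\tfrac1T$ is a constant independent of the decision variables, it can be factored through the recursion, which accounts for the $T$'s appearing on both sides of the displayed equation.

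I expect the main obstacle — though a mild one — to be the careful verification of the regularity conditions for measurable selection on the uncountable belief space $\P(\sX)$: specifically, confirming weak continuity of the belief-transition kernel in $\pi$ (including handling the event $\{\pi_t(Q_t^{-1}(q_t))=0\}$ where the filtering formula is only conditionally defined, which occurs with probability zero and can be assigned an arbitrary measurable value), and confirming that the cost $c(\pi,Q)$ is measurable and bounded. Once these are in place the argument is routine, and indeed the excerpt itself signals this by remarking that the statements "can also be straightforwardly derived" from the finiteness of $\Q$; so in the write-up I would state the continuity/measurability facts as short lemmas or inline observations and cite the Appendix's MDP results for the selection theorem and the optimality principle, rather than reproving them.
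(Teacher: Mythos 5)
Your proposal is correct and follows essentially the same route as the paper: invoke Theorem~\ref{WalrandVaraiyaTheorem} to reduce the infimum over $\Pi_A$ to one over $\Pi_W$, then apply the Bellman principle (Theorem~\ref{thm:bellman} in the Appendix) for the controlled belief process to obtain the recursion and a measurable optimal selector. The paper's proof is terser, simply citing these two results, while you additionally spell out the regularity checks (finiteness of $\Q$, boundedness and continuity of $c$, weak continuity of the filtering kernel) that justify invoking the selection theorem; this is entirely in the spirit of the paper's remark that the statements ``can also be straightforwardly derived'' from the finiteness of $\Q$.
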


\begin{proof}
  By Theorem \ref{WalrandVaraiyaTheorem}, there exists a policy $\Pi$ in $\Pi_{W}$
  such that (\ref{eq_opt}) holds.  Also, by Theorem~\ref{thm:bellman} in the
  Appendix, we can use the dynamic programming recursion to solve for an optimal
  quantization policy $\Pi \in \Pi_W$.
\end{proof}

\subsection{The Infinite Horizon Discounted Cost Problem}

As discussed in Section~\ref{sec:problemdefinition}, the goal of the infinite horizon discounted cost problem is to find policies that achieve
\begin{equation}
\label{LQGopt22}
J^{\beta}_{\pi_0}\coloneqq\inf_{\Pi\in \Pi_A} J_{\pi_0}^{\beta}(\Pi)
\end{equation}
for given $\beta \in (0,1)$, where
\begin{equation*}
J^{\beta}_{\pi_0}(\Pi) = \lim_{T \to \infty} E^{\Pi}_{\pi_0}\left[\,
  \sum_{t=0}^{T-1} \beta^t d(X_t,\hX_t)\right]. 
\end{equation*}

From the viewpoint of source coding, the discounted cost problem has much less
significance than the average cost problem.  However the discounted cost
approach will play an important role in deriving results for the average cost
problem.  

\begin{proposition}
\label{thm:dc}
There exists an optimal (deterministic) quantization policy in $\Pi_{WS}$ among all
policies in $\Pi_A$ that achieves the infimum in~(\ref{LQGopt22}).
\end{proposition}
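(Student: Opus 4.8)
The plan is to recognize the infinite horizon discounted problem as a discounted-cost Markov decision process (MDP) with state space $\P(\sX)$, action space $\Q$, one-stage cost $c(\pi,Q)$ from \eqref{eq_cdef}, and transition kernel given by the filtering recursion \eqref{filtre}, and then to invoke the standard existence theory for discounted MDPs (as summarized in the Appendix) to obtain a stationary deterministic optimal policy. Before doing that, however, one must justify that restricting to the controlled Markov chain $\{\pi_t\}$ with $\Q$-valued controls is without loss of optimality among \emph{all} admissible policies $\Pi_A$, not just among Walrand--Varaiya type policies. This is the first and most delicate point: the structural results (Theorems~\ref{witsenhausenTheorem} and~\ref{WalrandVaraiyaTheorem}) are stated for finite horizons. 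So first I would fix $\beta\in(0,1)$, truncate the cost at horizon $T$, apply Theorem~\ref{WalrandVaraiyaTheorem} (via Proposition~\ref{MeasurableSelectionApplies}) to conclude that the finite-horizon $\beta$-discounted cost is minimized within $\Pi_W$, and then let $T\to\infty$; since the tail $\sum_{t\ge T}\beta^t d(X_t,\hX_t)$ is bounded by $(\max d)\,\beta^T/(1-\beta)\to 0$ uniformly over all policies, the infimum of $J^\beta_{\pi_0}(\Pi)$ over $\Pi_A$ equals the infimum over $\Pi_W$ (indeed over Markov policies on the belief-MDP).

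Next I would set up the belief-MDP rigorously. The state space $\P(\sX)$ is a compact metric space (a simplex in $\R^{|\sX|}$); the action space $\Q$ is \emph{finite}, hence trivially compact; the one-stage cost $c(\pi,Q)$ in \eqref{eq_cdef} is, for each fixed $Q$, a finite minimum of linear functions of $\pi$, hence continuous (indeed Lipschitz) in $\pi$, and bounded since $d$ is bounded on the finite set $\sX\times\hat\sX$. The key regularity requirement is weak continuity of the transition kernel: I must check that $\pi\mapsto \mathrm{Law}(\pi_{t+1}\mid \pi_t=\pi, Q_t=Q)$ is weakly continuous for each fixed $Q$. From \eqref{filtre}, given $\pi$ and $Q$, the next belief $\pi_{t+1}$ is a deterministic function of the realized symbol $q_t\in\sM$ (finitely many values), with $q_t=i$ having probability $\pi(Q^{-1}(i))$ and resulting belief $\pi_{t+1}(x_{t+1}) = \frac{1}{\pi(Q^{-1}(i))}\sum_{x\in Q^{-1}(i)}P(x_{t+1}|x)\pi(x)$. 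The map $\pi\mapsto (\pi(Q^{-1}(i)))_i$ is continuous, and on the set $\{\pi:\pi(Q^{-1}(i))>0\}$ the resulting belief depends continuously on $\pi$; one handles the boundary (cells of zero probability) by noting such symbols occur with probability zero, so the induced distribution on $\P(\sX)$ still varies weakly continuously. Thus the model is a discounted MDP with compact action space, bounded continuous cost, and weakly continuous transitions — exactly the setting where the measurable selection / Bellman fixed-point theorem (Theorem~\ref{thm:bellman}-type result in the Appendix) guarantees a bounded continuous value function $J^\beta$ solving $J^\beta(\pi)=\min_{Q\in\Q}\{c(\pi,Q)+\beta E[J^\beta(\pi_{t+1})\mid\pi,Q]\}$ and a stationary deterministic optimal selector $\heta^*:\P(\sX)\to\Q$.

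Finally I would assemble the pieces: the stationary selector $\heta^*$ defines a policy in $\Pi_{WS}$; because it is optimal for the belief-MDP it achieves $\inf$ over all Markov (Walrand--Varaiya) policies of the infinite-horizon discounted cost, and by the truncation argument in the first paragraph this equals $J^\beta_{\pi_0}=\inf_{\Pi\in\Pi_A}J^\beta_{\pi_0}(\Pi)$. I expect the main obstacle to be the reduction step — passing from the finite-horizon structural result to the infinite-horizon discounted problem while being careful that (a) the class over which one optimizes is genuinely all of $\Pi_A$ and (b) the exchange of $\lim_{T\to\infty}$ and $\inf_\Pi$ is legitimate; the uniform geometric bound on the discounted tail is what makes this work, but it must be spelled out, including the subtlety that an optimal finite-horizon Markov policy depends on $T$, so one should argue via values rather than via a single policy, and then extract the stationary optimal policy from the MDP fixed-point equation rather than from a limit of finite-horizon policies. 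The weak-continuity verification of the filtering kernel near the boundary of the simplex is a secondary technical point that must also be handled with some care.
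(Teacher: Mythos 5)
Your proposal is correct and follows essentially the same route as the paper's proof: reduce to the belief-state MDP via the finite-horizon Walrand--Varaiya structural result, verify the regularity conditions (continuous bounded cost $c(\pi,Q)$, finite hence compact $\Q$, weakly continuous filtering kernel), and invoke the discounted-MDP value-iteration theorem (Theorem~\ref{thm:iterative}, not Theorem~\ref{thm:bellman}) to extract a stationary deterministic optimal selector in $\Pi_{WS}$. The only cosmetic difference is that the paper phrases the $\Pi_A$-to-$\Pi_W$ reduction as the one-sided interchange $\inf_\Pi\lim_T\geq\limsup_T\inf_\Pi$ closed off by the monotone convergence of the value iterates, while you use an explicit geometric tail bound $\|d\|_\infty\beta^T/(1-\beta)$; these are equivalent, and your note on handling the boundary of the simplex in the weak-continuity check is a detail the paper delegates to \cite{YukLinZeroDelay}.
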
 

\begin{proof} Observe that
  \begin{eqnarray}
\lefteqn{ \inf_{\Pi \in \Pi_A} \lim_{T \to \infty} E^{\Pi}_{\pi_0}\left[\,
  \sum_{t=0}^{T-1} \beta^t d(X_t,\hX_t)\right] }   \nonumber \qquad \qquad \qquad \\
 &\geq & \limsup_{T \to \infty}
\inf_{\Pi\in \Pi_A}  E^{\Pi}_{\pi_0}\left[\, \sum_{t=0}^{T-1} \beta^t
  d(X_t,\hX_t)\right]  \nonumber  \\ 
&= & \limsup_{T \to \infty} \min_{\Pi\in \Pi_W}  E^{\Pi}_{\pi_0}\left[\,
  \sum_{t=0}^{T-1} \beta^t d(X_t,\hX_t)\right]\nonumber \\
&= & \limsup_{T \to \infty} \min_{\Pi\in \Pi_W}  E^{\Pi}_{\pi_0}\left[\,
  \sum_{t=0}^{T-1} \beta^t c(\pi_t,Q_t)\right], \label{eqn:seqpiW}
\end{eqnarray}
where the first equality follows from Theorem~\ref{WalrandVaraiyaTheorem} and
the second from the definition of $c(\pi_t,Q_t)$ in \eqref{eq_cdef}. For each
$T$, let $\Pi_T$ denote the optimal policy in $\Pi_W$ achieving the  minimum in
(\ref{eqn:seqpiW}). 

One can easily check that conditions (i)--(iii) of Theorem~\ref{thm:iterative}
in the Appendix hold in our case (with $\sZ=\mathcal{P}(\sX)$, $\sA=\Q$,
$c(z,a)= c(\pi,Q)$, and $K(dz'|z,a)= P(d\pi'|\pi,Q)$). Specifically, the
definition of $c(\pi,Q)$ in \eqref{eq_cdef} shows that $c$ is continuous, so
(i) holds. Condition (ii) clearly holds. since $\Q$ is a finite set. Finally, it
is easily verified that the stochastic kernel $P(d\pi_{t+1} | \pi_t,Q_t)$ is
weakly continuous, i.e., that $\int_{\mathcal{P}(\sX)} f(\pi') P(d\pi'|\pi,Q)$ is
continuous on $\mathcal{P}(\sX) \times \Q$ for any continuous and bounded $f:
\mathcal{P}(\sX)\to \R$ (see \cite[Lemma 11]{YukLinZeroDelay}). Thus by
Theorem~\ref{thm:iterative} in the Appendix, this sequence of policies,
$\{\Pi_T\}$, can be obtained by using the iteration algorithm
\begin{equation*}
J_t(\pi) = \min_{Q \in \Q} \left[ c(\pi, Q) + \beta \int_{\mathcal{P}(\sX)}
  J_{t-1}(\pi') P(d\pi'|\pi,Q) \right] 
\end{equation*}
with $J_0(\pi) \equiv 0$.  By the same theorem, the sequence of value functions
for the policies $\{\Pi_T\}$, i.e.\ $\{J_{\pi_0}(\Pi_W,T)\}$, converges to the
value function of some deterministic policy $\Pi \in \Pi_{WS}$ (i.e., a
deterministic stationary Markov policy) which is optimal in the set of policies
$\Pi_W$ for the infinite horizon discounted cost problem. Thus by the chain of
inequalities leading to (\ref{eqn:seqpiW}), $\Pi$ is optimal among all policies
in $\Pi_A$.
\end{proof}

\section{Main Results: The infinite Horizon Average Distortion Problem} 
\label{chap:avgcost}

The more challenging average cost case deals with 
a performance measure (the long time average distortion) that is usually
studied in source coding problems. Formally, the infinite horizon average cost of
a coding policy $\Pi$ is
\begin{equation}
\label{infiniteCost2}
J_{\pi_0}(\Pi)  = \limsup_{T \to \infty}
E^{\Pi}_{\pi_0}\left[\frac{1}{T}   \sum_{t=0}^{T-1} d(X_t,\hX_t)\right]
\end{equation}
and the goal is to find an optimal policy attaining
\begin{eqnarray}\label{LQGoptAVG}
J_{\pi_0} \coloneqq   \inf_{\Pi\in \Pi_A} J_{\pi_0}(\Pi) .
\end{eqnarray}

\subsection{Optimality of policies in $\Pi_W$ for stationary sources}

For the infinite horizon setting structural results such as 
Theorems~\ref{witsenhausenTheorem} and \ref{WalrandVaraiyaTheorem} are not
available in the literature as the proofs are based on dynamic programming,
which starts at a finite terminal time stage and optimal policies are computed
by working backwards from the end. However, as in \cite{YukLinZeroDelay}, we can
prove an infinite horizon analog of Theorem~\ref{WalrandVaraiyaTheorem} assuming
that  $\{X_t\}$  starts from its  invariant measure $\pi^*$ (which exists e.g.\ if
 $\{X_t\}$ is irreducible and aperiodic). 

\begin{proposition}[\protect{\cite[Theorem~6]{YukLinZeroDelay}}]
\label{thm_QWopt}
Assume $\{X_t\}$ is a stationary Markov chain with invariant probability
$\pi^*$. Then there exists an optimal policy in $\Pi_W$ that solves the
minimization problem (\ref{LQGoptAVG}), i.e., there exists $\Pi\in \Pi_W$ such
that
\[
J_{\pi^*}(\Pi) = J_{\pi^*}.
\]
\end{proposition}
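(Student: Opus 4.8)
The plan is to realize the infinite horizon optimum as a limit of finite horizon optima --- where the Walrand--Varaiya reduction already applies --- and then to stitch together a single policy in $\Pi_W$ out of a sequence of finite horizon optimal Walrand--Varaiya policies. Since $\Pi_W\subseteq\Pi_A$ (the optimal decoder always being used), $\inf_{\Pi\in\Pi_W}J_{\pi^*}(\Pi)\ge J_{\pi^*}$ is immediate, so it suffices to produce some $\hat\Pi\in\Pi_W$ with $J_{\pi^*}(\hat\Pi)\le J_{\pi^*}$. For $T\ge1$ put $J^*_T:=\min_{\Pi\in\Pi_W}J_{\pi^*}(\Pi,T)$; by Theorem~\ref{WalrandVaraiyaTheorem} together with Proposition~\ref{MeasurableSelectionApplies} this minimum is attained and equals $\inf_{\Pi\in\Pi_A}J_{\pi^*}(\Pi,T)$. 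Interchanging the infimum with the $\limsup$, valid in the direction $\limsup_T\inf\le\inf\limsup_T$ (for each fixed $\Pi$, $\limsup_T\inf_{\Pi'}J_{\pi^*}(\Pi',T)\le\limsup_T J_{\pi^*}(\Pi,T)$), gives
\[
\limsup_{T\to\infty}J^*_T=\limsup_{T\to\infty}\inf_{\Pi\in\Pi_A}J_{\pi^*}(\Pi,T)\le\inf_{\Pi\in\Pi_A}\limsup_{T\to\infty}J_{\pi^*}(\Pi,T)=\inf_{\Pi\in\Pi_A}J_{\pi^*}(\Pi)=J_{\pi^*},
\]
so the finite horizon optima over $\Pi_W$ have $\limsup$ at most $J_{\pi^*}$.

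For the construction, note first that the finite horizon dynamic program (Proposition~\ref{MeasurableSelectionApplies}) produces, for each $T$, a \emph{single} policy $\hat\Pi^{(T)}\in\Pi_W$ whose maps are optimal simultaneously from every initial belief, i.e.\ $E^{\hat\Pi^{(T)}}_{\pi}\big[\sum_{t<T}c(\pi_t,Q_t)\big]=V_T(\pi)$ for all $\pi\in\P(\sX)$, where $V_T$ is the finite horizon value function (which is continuous on the compact set $\P(\sX)$, as in the proof of Proposition~\ref{thm:dc}). Set $\epsilon_T:=\big(\sup_{\pi\in\P(\sX)}\tfrac1T V_T(\pi)\big)-J^*_T$; the point is that $\epsilon_T\to0$, so that $\tfrac1T V_T(\pi)\le J_{\pi^*}+\epsilon_T+(J^*_T-J_{\pi^*})_+$ is at most $J_{\pi^*}+o(1)$ uniformly in $\pi$. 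Now choose block lengths $T_k\uparrow\infty$ growing slowly enough that $T_k/\sigma_{k-1}\to0$, where $\sigma_k:=\sum_{j\le k}T_j$ (e.g.\ $T_k=k$), and define $\hat\Pi\in\Pi_W$ by applying, at each time $t\in[\sigma_{k-1},\sigma_k)$, the corresponding map of $\hat\Pi^{(T_k)}$ to the \emph{true} conditional $\pi_t=P(dx_t\mid q_{[0,t-1]})$; this is a bona fide (in general nonstationary) member of $\Pi_W$. Because $\{\pi_t\}$ is a controlled Markov chain, conditioning on the realized entry belief $\pi_{\sigma_{k-1}}=\mu$ makes the block-$[\sigma_{k-1},\sigma_k)$ evolution that of a process started at belief $\mu$ and run under $\hat\Pi^{(T_k)}$, so $E^{\hat\Pi}_{\pi^*}\big[\sum_{t\in[\sigma_{k-1},\sigma_k)}c(\pi_t,Q_t)\,\big|\,\pi_{\sigma_{k-1}}=\mu\big]=V_{T_k}(\mu)\le T_k\big(J_{\pi^*}+o(1)\big)$ uniformly in $\mu$. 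Averaging these per-block estimates --- a Toeplitz/Ces\`aro argument, using $T_k/\sigma_{k-1}\to0$ to pass from the subsequence $T=\sigma_k$ to all $T$ --- gives $J_{\pi^*}(\hat\Pi)=\limsup_{T}\tfrac1T E^{\hat\Pi}_{\pi^*}\big[\sum_{t<T}c(\pi_t,Q_t)\big]\le J_{\pi^*}$; with the trivial reverse inequality, $J_{\pi^*}(\hat\Pi)=J_{\pi^*}$, which proves the proposition (and shows the infimum over $\Pi_W$ is attained).

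The one genuinely nontrivial ingredient is $\epsilon_T\to0$: that over a long horizon the optimal time-averaged cost --- equivalently, the cost of running the fixed near-optimal Walrand--Varaiya policy $\hat\Pi^{(T)}$ --- is, uniformly over the starting belief, within $o(1)$ of its value at $\pi^*$. This is a filter-stability / forgetting property of the controlled nonlinear filter $\{\pi_t\}$, and it is precisely here that the standing assumption that $\{X_t\}$ is irreducible and aperiodic is used: it yields the unique invariant $\pi^*$ and drives the coupling that makes $\pi_t$ forget its initialization. Two elementary facts make the estimate usable: the per-stage cost is bounded, $0\le c(\pi,Q)\le\max_{x\in\sX,\,\hx\in\hat\sX}d(x,\hx)$, so a bounded number of ``mismatched'' initial stages in each block contributes $O(1)$ in total and is washed out as $T_k\to\infty$; and the tower property $E^{\hat\Pi}_{\pi^*}[\pi_{\sigma_{k-1}}]=\pi^*$ singles out $\pi^*$ as the correct reference point. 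Converting this forgetting property into a bound uniform over the (possibly very irregular) policies $\hat\Pi^{(T_k)}$ --- which requires analyzing recurrence, coupling, and continuity of the $\P(\sX)$-valued controlled Markov chain --- is the main obstacle.
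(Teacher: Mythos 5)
The overall construction you propose — stitching together, on blocks of slowly growing length, finite-horizon DP-optimal Walrand--Varaiya policies applied to the true posterior $\pi_t$, and using a Ces\`aro argument to show the resulting nonstationary member of $\Pi_W$ attains $J_{\pi^*}$ — is the same piecing-together construction the paper refers to (deferring details to \cite[Theorem~6]{YukLinZeroDelay}). However, the way you bound the per-block cost has a genuine gap. You reduce everything to the claim $\epsilon_T \to 0$, i.e.\ the \emph{uniform-in-$\mu$} convergence $\sup_{\mu\in\P(\sX)} \tfrac1T V_T(\mu) \to J_{\pi^*}$, and you yourself flag this as the ``one genuinely nontrivial ingredient,'' proposing to establish it via filter stability/forgetting driven by irreducibility and aperiodicity. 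But the proposition does \emph{not} assume irreducibility or aperiodicity — its only hypothesis is that $\{X_t\}$ is stationary with invariant law $\pi^*$ — so you are importing assumptions that the statement does not grant, and the key step is left unproven even under those extra assumptions.

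What is actually needed is weaker and does not require any stability of the controlled filter. You observe but do not exploit the right fact: by stationarity, $E^{\hat\Pi}_{\pi^*}\bigl[\pi_{\sigma_{k-1}}(\cdot)\bigr] = \Pr(X_{\sigma_{k-1}}\in\cdot) = \pi^*(\cdot)$ for every policy and every $k$. Combine this with the \emph{concavity} of the finite-horizon value function $V_T$ on $\P(\sX)$: $c(\cdot,Q)$ in \eqref{eq_cdef} is a finite sum of minima of linear functionals of $\pi$, hence concave; the conditional-expectation term in the Bellman recursion, $\sum_q \pi(Q^{-1}(q))\,V_{T-1}\bigl(m_q(\pi)/\|m_q(\pi)\|\bigr)$ with $m_q(\pi)(\cdot)=\sum_{x\in Q^{-1}(q)}P(\cdot|x)\pi(x)$ linear in $\pi$, is the perspective of the concave $V_{T-1}$ composed with a linear map, hence concave; and a minimum over the finite set $\Q$ of concave functions is concave, so induction from $V_0\equiv 0$ gives concavity of every $V_T$. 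Jensen's inequality then yields, \emph{without any uniformity in $\mu$ and without any ergodicity assumption},
\[
E^{\hat\Pi}_{\pi^*}\bigl[V_{T_k}(\pi_{\sigma_{k-1}})\bigr] \;\le\; V_{T_k}\bigl(E^{\hat\Pi}_{\pi^*}[\pi_{\sigma_{k-1}}]\bigr) \;=\; V_{T_k}(\pi^*) \;=\; T_k J^*_{T_k},
\]
which is exactly the per-block bound your Ces\`aro step needs (together with your easy-direction bound $\limsup_T J^*_T \le J_{\pi^*}$). Substituting this Jensen-plus-concavity step for the unproven $\epsilon_T\to 0$ claim is what makes the argument close under only the stated hypotheses.
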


The proof of the proposition is straightforward; it relies on a
construction that pieces together policies from $\Pi_W$ that on time segments of
appropriately large lengths increasingly well approximate the infimum of the
infinite horizon cost achievable by policies in $\Pi_A$; see
\cite{YukLinZeroDelay} for the details. This construction results in a policy
that is nonstationary in general.  However, for the finite alphabet case
considered here, we will also establish the optimality of deterministic
\emph{stationary} policies even for possibly nonstationary Markov sources. The
remainder of the section focuses on this problem.

\subsection{Optimality of Stationary Coding Policies}
\label{sec:nonstat}

The following theorem is the main result of the paper. It states that for any
irreducible and aperiodic Markov source there exists a stationary Markov
(Walrand-Varaiya type) coding policy that is optimal among all zero delay coding
policies. Note that the theorem does not require the source to be stationary.

\begin{thm}\label{mainThm}
Assume that $\{X_t\}$ is an irreducible and aperiodic Markov chain. Then for any
initial distribution $\pi_0$, 
\[
\inf_{\Pi \in \Pi_A}  J_{\pi_0}(\Pi)  = \min_{\Pi \in \Pi_{WS}} J_{\pi_0}(\Pi).
\]
Furthermore, there exist $\Pi^*\in \Pi_{WS}$ that achieves the minimum above
simultaneously for all $\pi_0$ and  which satisfies for all $T\ge 1$
\begin{equation}
  \label{eq:convrate}
  \frac{1}{T}
E_{\pi_0}^{\Pi^*} \left[ \sum_{t=0}^{T-1} d(X_t,\hX_t) \right] \le J_{\pi_0} +
\frac{K}{T}
\end{equation}
for some positive constant $K$. 
\end{thm}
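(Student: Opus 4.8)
The plan is to prove Theorem~\ref{mainThm} via the \emph{vanishing discount} method, using the infinite horizon discounted cost results already established in Proposition~\ref{thm:dc} together with recurrence and coupling properties of the belief-valued controlled Markov chain $\{\pi_t\}$. Write $V_\beta(\pi) \coloneqq J^\beta_\pi$ for the optimal discounted value function (which by Proposition~\ref{thm:dc} is attained by a deterministic stationary Walrand--Varaiya policy) and fix a reference point $\pi^* \in \P(\sX)$, say the invariant measure of the source. The first step is to show that the \emph{relative} discounted values $h_\beta(\pi) \coloneqq V_\beta(\pi) - V_\beta(\pi^*)$ are bounded uniformly in $\beta \in (0,1)$ and equicontinuous on $\P(\sX)$. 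This is the standard hypothesis under which the average cost optimality equation (ACOE) can be obtained by passing to the limit $\beta \uparrow 1$: one extracts a subsequence $\beta_n \uparrow 1$ along which $(1-\beta_n) V_{\beta_n}(\pi^*) \to \rho$ (a constant, since $\sX$ is finite and costs are bounded) and $h_{\beta_n} \to h$ uniformly, and then the discounted cost optimality equation
\begin{equation*}
V_\beta(\pi) = \min_{Q\in\Q}\Bigl( c(\pi,Q) + \beta\! \int_{\P(\sX)}\! V_\beta(\pi')\, P(d\pi'|\pi,Q)\Bigr)
\end{equation*}
passes to the limit (using weak continuity of $P(d\pi'|\pi,Q)$ and continuity of $c$, both already invoked in the proof of Proposition~\ref{thm:dc}, plus a measurable selection argument) to yield
\begin{equation*}
\rho + h(\pi) = \min_{Q\in\Q}\Bigl( c(\pi,Q) + \int_{\P(\sX)} h(\pi')\, P(d\pi'|\pi,Q)\Bigr).
\end{equation*}
A measurable selector $Q = \heta^*(\pi)$ attaining this minimum defines the desired stationary policy $\Pi^* \in \Pi_{WS}$, and the standard verification argument (iterating the ACOE and using boundedness of $h$) shows $\rho$ is the optimal average cost simultaneously for every initial $\pi_0$, giving the first displayed equality of the theorem, modulo also checking via Proposition~\ref{thm_QWopt}-type arguments that $\rho$ equals $\inf_{\Pi\in\Pi_A} J_{\pi_0}(\Pi)$ (the structural reduction to $\Pi_W$ holds on finite horizons, and average cost is a limit of normalized finite-horizon costs).

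The main obstacle is establishing the uniform boundedness and equicontinuity of $\{h_\beta\}$, and this is where the irreducibility and aperiodicity of $\{X_t\}$ must be exploited. The key idea is a \emph{coupling} argument for the belief process: given two initial beliefs $\pi, \bar\pi$ and running the \emph{same} quantizer sequence, the filtering recursion~\eqref{filtre} is a contraction in a suitable sense once the underlying source chains started from the two beliefs have coupled (which, by irreducibility and aperiodicity, happens in finite expected time uniformly bounded over starting states, hence over starting beliefs). More precisely, I would show there is a finite $N$ and $\delta > 0$ such that from any $\pi$ one can, with probability bounded below, drive the belief to a fixed small neighborhood of a reference belief within $N$ steps under an appropriate control — a ``uniform reachability'' or minorization property — and then bound $V_\beta(\pi) - V_\beta(\bar\pi)$ by the expected cost incurred during this coupling phase, which is uniformly bounded since $d$ is bounded on the finite alphabet. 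This bound is independent of $\beta$, yielding uniform boundedness; a refinement of the same coupling estimate, quantifying how the discrepancy $\|\pi_t - \bar\pi_t\|$ decays, yields equicontinuity. The recurrence, coupling, convergence, and continuity properties of $\{\pi_t\}$ that the introduction promises are exactly what is needed here, and assembling them cleanly is the technical heart of the proof.

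Finally, for the convergence rate~\eqref{eq:convrate}: once the ACOE holds with a bounded solution $h$, the standard identity obtained by telescoping the ACOE along the trajectory under $\Pi^*$ gives
\begin{equation*}
E_{\pi_0}^{\Pi^*}\Bigl[\sum_{t=0}^{T-1} c(\pi_t,Q_t)\Bigr] = T\rho + h(\pi_0) - E_{\pi_0}^{\Pi^*}[h(\pi_T)],
\end{equation*}
so dividing by $T$ and using $\|h\|_\infty < \infty$ gives the bound with $K = 2\|h\|_\infty$ (and $J_{\pi_0} = \rho$ for all $\pi_0$). I would close by remarking that the $O(1/T)$ rate is then immediate and requires no extra hypotheses beyond those already used, and that the bound is uniform in $\pi_0$ since $h$ is bounded on all of $\P(\sX)$.
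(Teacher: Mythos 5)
Your high-level architecture matches the paper's: use the vanishing discount method, establish uniform boundedness and equicontinuity of the relative discounted values $h_\beta(\pi) = J^\beta_\pi - J^\beta_{\pi^*}$, invoke the ACOE (via Theorem~\ref{thm3} / \cite{saldi2014near}), and obtain the $O(1/T)$ rate by telescoping the ACOE with $K = 2\|h\|_\infty$. That part is right. But the way you propose to prove the key equicontinuity estimate has a genuine gap, and a second step is treated too lightly.

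The gap: you propose to establish equicontinuity of $\{h_\beta\}$ by coupling or minorizing the \emph{belief process} $\{\pi_t\}$. This is unlikely to work. The claim that ``the filtering recursion is a contraction once the underlying source chains have coupled'' is not correct as stated: $\pi_t$ and $\bar\pi_t$ are determined by the entire histories $q_{[0,t-1]}$ and $\bar q_{[0,t-1]}$, so even after the source realizations agree the beliefs need not, and since the policy chooses quantizers based on the (still different) beliefs, the subsequent channel outputs diverge further. Your fallback — a uniform reachability / minorization property of the belief-valued chain — is exactly the kind of filter-stability statement that is notoriously hard for POMDPs and does \emph{not} follow from irreducibility and aperiodicity of the hidden chain alone. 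The paper sidesteps the belief process entirely (Lemma~\ref{sim_arg_lemma}): it couples the two \emph{source} processes $\{X'_t\}\sim(\mu_0,P)$ and $\{X''_t\}\sim(\zeta_0,P)$ on $\sX$, constructs a suboptimal \emph{randomized} encoder for $\{X'_t\}$ that simulates $\{X''_t\}$ (using shared randomness) and then runs the $\zeta_0$-optimal coder on it, and bounds the cost difference by $E[\tau]\,\|d\|_\infty$ where $\tau$ is the coupling time of the two source chains. This yields $|J^\beta_{\mu_0} - J^\beta_{\zeta_0}| \le K_1\|d\|_\infty\,\rho_1(\mu_0,\zeta_0)$, uniformly in $\beta$, with no analysis of belief-process recurrence at all — you only need positive recurrence of the product source chain $(X_t,Y_t)$ on the finite set $\sX\times\sX$, which is immediate.

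The second underweighted step: you remark, almost parenthetically, that one should check $\rho = \inf_{\Pi\in\Pi_A} J_{\pi_0}(\Pi)$ ``via Proposition~\ref{thm_QWopt}-type arguments.'' But Proposition~\ref{thm_QWopt} only gives $\inf_{\Pi_A}=\inf_{\Pi_W}$ when the source starts from its invariant measure, and even then only gives nonstationary Markov policies. The paper does substantial work here: Lemma~\ref{mainThm1} (via Lemma~\ref{lem:abel} and the same coupling bound) shows that optimal policies in $\Pi_{WS}$ started from a randomized invariant initial belief are at least as good as arbitrary policies in $\Pi_A$ for any $\pi_0$, and Lemma~\ref{mainThm2} removes the belief-randomization, again using the coupling bound, to conclude Lemma~\ref{keyConnection}: $\inf_{\Pi_A}J_{\pi_0}=\inf_{\Pi_{WS}}J_{\pi_0}$ for every $\pi_0$. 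The structural reduction to Walrand--Varaiya policies is only available on finite horizons, so this bridge is a main contribution, not a side technicality. You should fill both gaps, and the cleanest way is to replace belief-space coupling by the source-level simulation argument, after which every other step in your outline goes through essentially as you wrote it.
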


The theorem is proved in the next subsection where the constant $K$ is more
explicitly identified. Here we give a brief description of the main steps.  The
key step in the proof is Lemma~\ref{sim_arg_lemma} where we  build on the approach
of Borkar~\cite{borkar2000average} (but use a different construction) to show
that for any two initial distributions $\mu_0$ and $\zeta_0$, the absolute
difference of the optimal infinite horizon discounted costs $J^{\beta}_{\mu_0}$
and $J^{\beta}_{\zeta_0} $ is uniformly upper bounded by a constant times the
$L_1$ Wasserstein distance between $\mu_0$ and $\zeta_0$. With the aid of this
result and an Abelian lemma that relates the infinite horizon discounted cost to
the average cost, Lemma~\ref{mainThm1} shows through the vanishing discount
approach that for the infinite horizon average cost problem, \emph{randomized}
stationary Markov policies are at least as good as deterministic policies in
$\Pi_A$. Lemma~\ref{mainThm2} in turn shows that deterministic stationary Markov
policies are no worse than randomized ones, which, together with
Lemma~\ref{mainThm1}, gives
$\inf_{\Pi \in \Pi_A} J_{\pi_0}(\Pi) = \inf_{\Pi \in \Pi_{WS}} J_{\pi_0}(\Pi)$
(Lemma~\ref{keyConnection}). Finally, we show that Lemma~\ref{sim_arg_lemma}
implies that the \emph{average cost optimality equation} (ACOE) (see
Theorem~\ref{thm3} in the Appendix) holds for our controlled Markov chain, which
in turn implies that the infimum $\inf_{\Pi \in \Pi_{WS}} J_{\pi_0}(\Pi)$ is
achieved by some policy in $\Pi_{WS}$, proving the first statement of the
theorem. The $O(1/T)$ convergence rate result is shown to be a direct
consequence of the ACOE.

\begin{defn}[$\epsilon$-Optimality]
  Given an initial distribution $\pi_0$ and $\epsilon>0$, a policy $\Pi\in \Pi_A$
  is \emph{$\epsilon$-optimal} if $J_{\pi_0}(\Pi) \leq J_{\pi_0} + \epsilon$,
  where $J_{\pi_0}$ is the optimal performance for the infinite horizon average
  cost problem.
\end{defn}

Now suppose that $\{X_t\}$ is irreducible and aperiodic and it starts from the
unique invariant probability $\pi^*$ so that it is a stationary 
process. Consider the (nonstationary) coding policy that is obtained by
periodically extending an initial segment of the optimal stationary policy
$\Pi^*$ in Theorem~\ref{mainThm}. In particular, assume $\Pi^*=\{\eta^*\}$ and
for $T\ge 1$  consider the periodic policy $\Pi^{(T)}=\{\eta^{(T)}_t\}$, where
$\eta^{(T)}_t=\eta^*$ for $t=kT+1,\ldots,(k+1)T$, $k=0,1,2,\ldots,$ and
$\eta^{(T)}_t\equiv \eta^*(\pi^*)$ for $t=kT$, $k=0,1,2,\ldots$. Since
$\{X_t\}$ is stationary, the infinite horizon cost of $\Pi^{(T)}$ is
\[
  J_{\pi^*}(\Pi^{(T)}) =   \frac{1}{T}
E_{\pi_0}^{\Pi^*} \left[ \sum_{t=0}^{T-1} d(X_t,\hX_t) \right].
\]
Since the encoder of $\Pi^{(T)}$ is reset to $\eta^*(\pi^*)$ each time after
processing $T$ source samples, we can say that it has memory length $T$. The
following result, which is implied by the above construction and the bound
\eqref{eq:convrate}, may have implications in the construction of practical
codes since, loosely speaking, the complexity of a code is determined by its
memory length.

\begin{thm}\label{epsOptimalityTheorem}
  Assume $\{X_t\}$ is an irreducible and aperiodic Markov chain. If $X_0 \sim
  \pi^*$, where $\pi^*$ is the invariant probability measure, then for every
  $\epsilon>0$, there exists a finite memory, nonstationary, but periodic
  coding policy with period at most $\frac{K}{\epsilon}$ that is 
  $\epsilon$-optimal, where $K$ is 
  the constant from Theorem~\ref{mainThm}.
\end{thm}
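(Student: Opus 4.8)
The plan is to read the result off from Theorem~\ref{mainThm} applied to the stationary source, together with the periodic construction $\Pi^{(T)}$ described just above the statement. First I would invoke Theorem~\ref{mainThm} to fix the deterministic stationary Markov policy $\Pi^*=\{\eta^*\}\in\Pi_{WS}$ that is optimal for every initial distribution and satisfies the finite-blocklength bound \eqref{eq:convrate}, i.e.\ $\frac1T E_{\pi^*}^{\Pi^*}\big[\sum_{t=0}^{T-1}d(X_t,\hX_t)\big]\le J_{\pi^*}+\frac KT$ for all $T\ge1$, with the explicit constant $K$ coming from that proof. Then, for a blocklength $T$ to be specified later, I would form the period-$T$ policy $\Pi^{(T)}$ (together with the matching periodically reset decoder): at each time $kT$ it applies the quantizer $\eta^*(\pi^*)$, and for $t=kT+1,\dots,(k+1)T-1$ it applies $\eta^*$ to the belief computed within the current block under the prior $\pi^*$. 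This is an admissible zero delay policy with memory length $T$.

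The one substantive step is to verify $J_{\pi^*}(\Pi^{(T)})=\frac1T E_{\pi^*}^{\Pi^*}\big[\sum_{t=0}^{T-1}d(X_t,\hX_t)\big]$. Here I would use stationarity of $\{X_t\}$: since the chain starts from its invariant law, $X_{kT}\sim\pi^*$ for every $k$, so resetting the belief to $\pi^*$ at the start of each block is consistent with the true conditional law; moreover all messages and reconstructions produced in block $k$ are functions of $X_{kT},\dots,X_{(k+1)T-1}$ only. By the Markov property the joint law of $(X_{kT},\dots,X_{(k+1)T-1},\hX_{kT},\dots,\hX_{(k+1)T-1})$ under $\Pi^{(T)}$ therefore equals that of $(X_0,\dots,X_{T-1},\hX_0,\dots,\hX_{T-1})$ under $\Pi^*$ started at $\pi^*$, so every block contributes the same expected distortion $E_{\pi^*}^{\Pi^*}\big[\sum_{t=0}^{T-1}d(X_t,\hX_t)\big]$. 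Since $d$ is bounded (finite alphabets), the Cesàro averages converge and the $\limsup$ defining $J_{\pi^*}(\Pi^{(T)})$ equals $\frac1T$ times that common value, the partial last block contributing only $O(1/n)$.

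Combining the two displays gives $J_{\pi^*}(\Pi^{(T)})\le J_{\pi^*}+\frac KT$, so taking $T$ to be (the least integer exceeding) $K/\epsilon$ makes $\Pi^{(T)}$ an $\epsilon$-optimal, finite memory, periodic policy with period $T$ of order $K/\epsilon$. I do not anticipate a real obstacle: the theorem is essentially a corollary of Theorem~\ref{mainThm}, and the only place care is needed is the identity in the previous paragraph, which is exactly where the stationarity hypothesis $X_0\sim\pi^*$ enters --- for a general $\pi_0$ the belief reset would not match the true conditional law and the clean per-block decomposition would fail.
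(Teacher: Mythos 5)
Your proposal is correct and matches the paper's own argument exactly: the paper also constructs the same period-$T$ policy $\Pi^{(T)}$ by periodically resetting $\Pi^*$ to the quantizer $\eta^*(\pi^*)$, uses stationarity ($X_{kT}\sim\pi^*$) to obtain the per-block identity $J_{\pi^*}(\Pi^{(T)})=\frac1T E_{\pi^*}^{\Pi^*}[\sum_{t=0}^{T-1}d(X_t,\hX_t)]$, and then invokes the $O(1/T)$ bound \eqref{eq:convrate} from Theorem~\ref{mainThm} and chooses $T$ of order $K/\epsilon$. Your added remark on where the hypothesis $X_0\sim\pi^*$ is essential (to make the belief reset consistent with the true conditional law) is precisely the point the paper's construction relies on.
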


\subsection{Proof of Theorem~\ref{mainThm}}

Let $\sX = \{1,\cdots,|\sX|\}$ be viewed as a subset of $\mathbb{R}$.  The
$L_1$ \emph{Wasserstein distance} \cite{villani2008optimal} between two
distributions $\mu_0$ and $\zeta_0$ is defined as
\begin{equation}
\label{def:wass}
\rho_1(\mu_0,\zeta_0) \coloneqq   \inf_{X\sim \mu_0, Y\sim \zeta_0}  E\big[|X-Y|\big], 
\end{equation}
where the infimum is taken over all joint distributions of pairs of $\sX$-valued
random variables $(X,Y)$ such that $X\sim \mu_0$ and $Y\sim \zeta_0$. It can be
shown that the infimum in the definition is in fact a minimum and that the $L_1$
Wasserstein distance is a metric on $\mathcal{P}(\sX)$. 

Recall the definition
\[
J^{\beta}_{\pi_0}\coloneqq  \inf_{\Pi\in \Pi_A} \lim_{T \to \infty} E^{\Pi}_{\pi_0}\left[\,
  \sum_{t=0}^{T-1} \beta^t d(X_t,\hX_t)\right] .
\]

The following lemma is a key step in the proof.

\begin{lem}
\label{sim_arg_lemma}
Suppose the source is an irreducible and aperiodic Markov chain.  Then for any
pair of initial distributions $\mu_0$ and $\zeta_0$, and any $\beta \in (0,1)$,
we have
\[
\big|J^{\beta}_{\mu_0} -J^{\beta}_{\zeta_0} \big| \le   K_1 \|d\|_{\infty}
\rho_1(\mu_0,\zeta_0), 
\]
where  $K_1$  is a finite constant and $\|d\|_{\infty} = \max_{x,y} d(x,y)$. 
\end{lem}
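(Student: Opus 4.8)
The plan is to couple two copies of the controlled Markov chain of beliefs, one started from $\mu_0$ and one from $\zeta_0$, drive them with a common (near-)optimal quantization policy, and then bound the discounted-cost difference by the expected coupling time, which in turn is controlled by $\rho_1(\mu_0,\zeta_0)$. Concretely, fix $\beta\in(0,1)$ and let $\Pi$ be a stationary deterministic policy in $\Pi_{WS}$ that is (nearly) optimal for $J^{\beta}_{\mu_0}$ (such a policy exists by Proposition~\ref{thm:dc}). First I would construct a joint process $(X_t,Y_t)$ with $X_0\sim\mu_0$, $Y_0\sim\zeta_0$ and $E[|X_0-Y_0|]=\rho_1(\mu_0,\zeta_0)$ (using that the Wasserstein infimum is attained), and then run the \emph{same} sequence of quantizers on both, i.e.\ at each time apply $Q_t=\heta(\pi_t)$ computed from the $\mu_0$-chain's belief to both source realizations. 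The key observation is that $\{X_t\}$ and $\{Y_t\}$ are each marginally distributed as the Markov source from their respective initial laws, so the policy applied to the $Y$-chain is an admissible (not necessarily optimal) policy from $\zeta_0$; hence $J^{\beta}_{\zeta_0}\le \sum_t\beta^t E[d(Y_t,\hat Y_t)]$ where $\hat Y_t$ uses the optimal decoder for that induced policy. Symmetrically one bounds $J^\beta_{\mu_0}$, so it suffices to bound $\big|\sum_t\beta^t E[d(X_t,\hX_t)] - \sum_t\beta^t E[d(Y_t,\hat Y_t)]\big|$, and since $|d(x,\cdot)-d(y,\cdot)|$ is trivially $\le 2\|d\|_\infty$ whenever $x\ne y$ and $=0$ when $x=y$, this difference is at most $2\|d\|_\infty\sum_t\beta^t\Pr(X_t\ne Y_t)$.

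The crux is then to show $\sum_t\beta^t\Pr(X_t\ne Y_t)\le C\,\rho_1(\mu_0,\zeta_0)$ for a constant $C$ independent of $\beta$. Here I would use the standard Markov-chain coupling: once $X_s=Y_s$ for some $s$, I can relabel so that $X_t=Y_t$ for all $t\ge s$ (the two source chains merge and are driven identically thereafter, and the beliefs for the two induced policies coincide from then on). Irreducibility and aperiodicity of the finite-state chain give a uniform geometric coupling bound: there exist $N$ and $\delta>0$ (depending only on $P$) such that from any pair of states the chains can be coupled to meet within $N$ steps with probability $\ge\delta$, hence $\Pr(\text{not yet coupled by time }t)\le (1-\delta)^{\lfloor t/N\rfloor}$ — but this gives a bound independent of the initial distance, which is not quite what we want. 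To get the factor $\rho_1(\mu_0,\zeta_0)$, I would instead start the coupling from the optimal Wasserstein coupling of $(X_0,Y_0)$: with probability $1-\rho_1(\mu_0,\zeta_0)\cdot(\text{something})$ we already have $X_0=Y_0$... more carefully, since $\sX\subset\mathbb R$ with integer points, $|X_0-Y_0|\ge 1$ whenever $X_0\ne Y_0$, so $\Pr(X_0\ne Y_0)\le E[|X_0-Y_0|]=\rho_1(\mu_0,\zeta_0)$. Then $\Pr(X_t\ne Y_t)\le \Pr(X_0\ne Y_0)\cdot\sup(\text{prob.\ never coupled }|\,X_0\ne Y_0)\le \rho_1(\mu_0,\zeta_0)$ for every $t$ (coupling probabilities only decrease), so $\sum_t\beta^t\Pr(X_t\ne Y_t)\le \rho_1(\mu_0,\zeta_0)/(1-\beta)$ — which is again $\beta$-dependent and blows up.

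To remove the $1/(1-\beta)$, I would combine the two estimates: $\Pr(X_t\ne Y_t)\le \min\big(\rho_1(\mu_0,\zeta_0),\,(1-\delta)^{\lfloor t/N\rfloor}\big)$, and then $\sum_{t\ge0}\beta^t\min(a,r^{\lfloor t/N\rfloor})\le \sum_{t\ge0}\min(a,r^{\lfloor t/N\rfloor})$, which is a convergent geometric-type series whose value is $O(a\log(1/a))$ — still not linear in $a$. The honest fix, and the one I expect the paper uses, is a direct Wasserstein-contraction argument: by the coupling, $\rho_1(\text{law}(X_t),\text{law}(Y_t))\le \lambda^{\lfloor t/N\rfloor}\rho_1(\mu_0,\zeta_0)$ for some $\lambda<1$ (Wasserstein contractivity of the transition kernel under irreducibility/aperiodicity, up to iterating $N$ steps), and moreover one can realize a \emph{Markovian} coupling for which $\Pr(X_t\ne Y_t)\le \rho_1(\text{law}(X_t),\text{law}(Y_t))$ at each $t$ (again using the integer-lattice structure so that any coupling has $\Pr(X_t\ne Y_t)\le E|X_t-Y_t|$). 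Then $\sum_t\beta^t\Pr(X_t\ne Y_t)\le \sum_t\lambda^{\lfloor t/N\rfloor}\rho_1(\mu_0,\zeta_0)=\frac{N}{1-\lambda}\rho_1(\mu_0,\zeta_0)$, giving $K_1=2N/(1-\lambda)$, independent of $\beta$.

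\textbf{Main obstacle.} The delicate point — and where I expect to spend the real effort — is item (3) above: arranging the coupling so that the \emph{same} quantizer sequence can legitimately be applied to both chains while keeping each chain's \emph{optimal decoder} in play, and controlling the resulting belief processes. One must be careful that the policy which is optimal (or $\varepsilon$-optimal) from $\mu_0$ remains admissible from $\zeta_0$ — this is fine because admissibility in $\Pi_{WS}$ is initial-distribution-free — but the decoder that is optimal for the $\mu_0$-induced distribution of $q_{[0,t]}$ is generally \emph{not} optimal for the $\zeta_0$-induced one; however this only helps the inequality we need (a suboptimal decoder makes $J^\beta_{\zeta_0}$ no larger than the coupled cost), so the estimate goes through in the needed direction. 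The remaining technical work is the Wasserstein-contraction constant $\lambda<1$ for the $N$-step source kernel, which follows from finiteness plus irreducibility and aperiodicity (the chain has a positive lower bound on $N$-step transition probabilities to every state), and then assembling the pieces with an interchange of $\limsup/\inf$ justified exactly as in the chain of inequalities in the proof of Proposition~\ref{thm:dc}.
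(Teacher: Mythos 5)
Your overall strategy---couple two copies of the source starting from the optimal Wasserstein coupling of $(\mu_0,\zeta_0)$, run the same (near-)optimal quantization policy on both by letting the encoder simulate the second chain from common randomness, and bound the discounted cost gap by $\|d\|_\infty$ times a discounted ``time until the two chains merge''---is exactly the paper's approach (the paper credits this simulation idea to Borkar~\cite{borkar2000average}). You also correctly handle the admissibility and decoder-suboptimality issues, and correctly use $\Pr(X_0\ne Y_0)\le\rho_1(\mu_0,\zeta_0)$ via the integer-lattice structure. Where you diverge is in the last quantitative step. After two abandoned attempts you land on a Wasserstein/Doeblin contraction bound $\Pr(X_t\ne Y_t)\le \lambda^{\lfloor t/N\rfloor}\rho_1(\mu_0,\zeta_0)$, giving $K_1=O(N/(1-\lambda))$. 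The paper's argument is cleaner and avoids contraction entirely: letting $\tau$ be the coupling time, it bounds the discounted discrepancy directly by $E[\tau]\,\|d\|_\infty$ (since $X'_t=X''_t$ for $t>\tau$ and $\sum_{t\le\tau}\beta^t\le\tau+1$, no $1/(1-\beta)$ ever appears), then conditions on the initial pair: $E[\tau]=\sum_{x,y}\lambda(x,y)E[\tau\mid X_0=x,Y_0=y]\le K_1\Pr(X_0\ne Y_0)\le K_1\rho_1(\mu_0,\zeta_0)$ with $K_1=\max_{x,y}E[\tau\mid X_0=x,Y_0=y]$, which is finite simply because the product chain on $\sX\times\sX$ is irreducible, aperiodic, and finite, hence positive recurrent. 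This sidesteps both the explicit Doeblin constants and your more delicate (and, as stated, not quite correct) claim that a single Markovian coupling can be arranged so that $\Pr(X_t\ne Y_t)\le\rho_1(\mathrm{law}(X_t),\mathrm{law}(Y_t))$ at every $t$: a fixed coupling of the two trajectories will generally not achieve the optimal per-time Wasserstein coupling simultaneously at all $t$. What you actually want (and can get) is the weaker statement $\Pr(X_t\ne Y_t)\le(1-\delta)^{\lfloor t/N\rfloor}\Pr(X_0\ne Y_0)$ from a uniform Doeblin coupling that sticks once merged, which repairs your argument; but the paper's expected-hitting-time route makes even that unnecessary. In short: same skeleton, different (and somewhat heavier) final estimate on your side, with one minor imprecision in the coupling claim.
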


\begin{proof}
Note that by  monotone convergence  for any $\Pi$ and $\beta\in (0,1)$, 
\[
\lim_{T\to \infty} E\biggl[\sum_{t=0}^{T-1} \beta^t   d(X_t,\hX_t)\biggr] =
E\biggl[ \sum_{t=0}^{\infty} \beta^t   d(X_t,\hX_t) \biggr]. 
\]
Thus the lemma statement is equivalent to
\[
\bigg|\inf_{\Pi \in \Pi_A}E^{\Pi}_{\mu_0}\bigg[\sum_{t=0}^{\infty} \beta^t
  d(X_t,\hX_t)\bigg] - \inf_{\Pi \in \Pi_A} E^{\Pi}_{\zeta_0}\bigg[\sum_{t=0}^{\infty} \beta^t
  d(X_t,\hX_t)\bigg] \bigg| \leq K_1 \|d\|_{\infty} \rho_1(\mu_0,\zeta_0) .
\]

The proof builds on the approach of Borkar~\cite{borkar2000average} (see also
\cite{borkar2007dynamic} and \cite{BorkarMitterTatikonda}), but our argument 
is different (and also more direct) since the absolute continuity conditions in
\cite{borkar2000average} are not applicable here due to quantization.
 As in \cite{BorkarMitterTatikonda}, in the proof we will enlarge
the space of admissible coding policies to allow for randomization at the
encoder.  Since for a discounted infinite horizon optimal encoding problem
optimal policies are deterministic even among possibly randomized policies (see
Proposition~\ref{thm:dc}), allowing common randomness does not change the
optimal performance.

In our construction, we will use the well known coupling method for Markov
chains. In particular, we will use a minor modification of the coupling argument
in the proof of \cite[Theorem~1.8.3]{Nor97}. 

Given the initial distributions $\mu_0$ and $\zeta_0$, let $\lambda$ be  a
joint distribution on $\sX\times \sX$ , having marginals $\mu_0$ and $\zeta_0$,
such that $E\big[|X-Y|\big]=\rho_1(\mu_0,\zeta_0)$ if $(X,Y)\sim \lambda$.  

Now consider the $\sX\times \sX$-valued process $\{(X_t,Y_t)\}_{t\ge 0}$ such
that $\{X_t\}_{t\ge 0}\sim (\mu_0,P)$, $\{Y_t\}_{t\ge 0}\sim (\zeta_0,P)$,
$(X_0,Y_0)\sim \lambda$, and $\{X_t\}_{t\ge 1} $ and $\{Y_t\}_{t\ge 1} $ are
conditionally independent given $(X_0,Y_0)$. We note that given $\{X_t\}\sim
(\mu_0,P)$, a process $\{(X_t,Y_t)\}$ with such a distribution can be obtained
via an i.i.d.\ randomization process $\{W_t\}$ which is uniform on the interval
$[0,1]$ and is independent of $\{X_t\}$, and via  appropriate functions $F_0,F:
\sX\times [0,1]\to \sX$, by letting
\begin{equation}
  \label{eq:ygen}
Y_0=F_0(X_0,W_0) \text{ and } Y_t=F(Y_{t-1},W_t) \quad  \text{for all $t\ge 1$.}
\end{equation}

Fixing a reference state $b\in \sX$,
define
\[
\tau =\inf \{t\ge 0: X_n=Y_n=b\}.
\]
Since the common transition probability $P$ of $\{X_t\}$ and $\{Y_t\}$ is
irreducible and aperiodic, it easily follows that $\{(X_t,Y_t)\}$ is an
irreducible and aperiodic Markov chain \cite[p.~41]{Nor97}. Since $\sX$ is
finite, this implies that the chain is positive recurrent and thus
 $E[\tau]<\infty$. Define $X'_t=X_t$ for  $t\geq 0$
so that $\{X'_t\}\sim (\mu_0,P)$.  Also define the process $\{X''_t\}$ by
\[
  X''_t=
  \begin{cases}
    Y_t & \text{ if $t\le \tau$}\\
     X_t   & \text{ if $t > \tau$}.
  \end{cases}
\]
It is shown in \cite[p.~42]{Nor97} that  $\{X''_t\}$  is a Markov chain such that
$ \{X''_t\}\sim (\zeta_0,P)$. 

Assume without loss of generality that $J^\beta_{\mu_0} - J^\beta_{\zeta_0}\ge
0$. Then from the above
\begin{eqnarray}
\big| J^\beta_{\mu_0} - J^\beta_{\zeta_0}\big| & =&  J^\beta_{\mu_0} -
J^\beta_{\zeta_0} \nonumber \\*
&=& \inf_{\Pi\in \Pi_A}E^{\Pi}_{\mu_0}\bigg[\sum_{t=0}^{\infty} \beta^t
d(X'_t,\hX'_t)\bigg] - \inf_{\Pi\in \Pi_A} E^{\Pi}_{\zeta_0}\bigg[\sum_{t=0}^{\infty} \beta^t
d(X''_t,\hX''_t)\bigg]   \label{eq:muzetadiff} \\
&=& E^{\Pi'}_{\mu_0}\bigg[\sum_{t=0}^{\infty} \beta^t
d(X'_t,\hX'_t)\bigg] -  E^{\Pi''}_{\zeta_0}\bigg[\sum_{t=0}^{\infty} \beta^t
d(X''_t,\hX''_t)\bigg],  \label{eq:muzetadiff1}
\end{eqnarray}
where $\Pi'$  (resp.\ $\Pi''$) achieves the first (resp.\
the second) infimum  in \eqref{eq:muzetadiff}; see Proposition~\ref{thm:dc}. 

Consider the following suboptimal coding and decoding policy for $\{X'_t\}$: In
addition to observing the source $X'_t=X_t$, $t\ge 0$, the encoder is
also given the randomization process $\{W_t\}$ which is independent of
$\{X'_t\}$. Then the encoder can generate $Y_0,\ldots,Y_{\tau}$ according
to the representation \eqref{eq:ygen} and thus it can produce the second source
process $\{X''_t\}$. The encoder for $\{X'_t\}$ feeds sequentially the
$\{X''_t\}$ values to the quantizer policy  $\Pi''$ and produces the same
channel symbols $q''_t$ and reproduction sequence
$\hX''_t=\gamma''_t(q''_{[0,t]})$ as  the policy $\Pi''$ does in response to
$\{X''_t\}$.  Note that this procedure comprises a 
suboptimal {\it randomized encoder} and a {\it deterministic decoder} for coding
$\{X'_t\}$. Let us denote this randomized policy by $\hat{\Pi}$. Thus we obtain
the upper bound  
\[
E^{\Pi'}_{\mu_0}\bigg[\sum_{t=0}^{\infty} \beta^t
  d(X'_t,\hX'_t)\bigg] \le E^{\hat{\Pi}}_{\mu_0}\bigg[\sum_{t=0}^{\infty} \beta^t
  d(X'_t,\hX''_t)\bigg].
\]
In view of this and  \eqref{eq:muzetadiff1},  we can write 
\begin{eqnarray}\label{bound1}
\lefteqn{\big| J^\beta_{\mu_0} - J^\beta_{\zeta_0}\big| } \nonumber \\
&\le &   E^{\hat{\Pi}}_{\mu_0}\bigg[\sum_{t=0}^{\infty} \beta^t
  d(X'_t,\hX''_t)\bigg] - E^{\Pi''}_{\zeta_0}\bigg[\sum_{t=0}^{\infty} \beta^t
  d(X''_t,\hX''_t)\bigg] \\
& \leq & \bigg | E \bigg[\sum_{t=0}^{\infty} \beta^t \big( d(X'_t,\hX''_t) -
d(X''_t,\hX''_t) \big)\bigg] \bigg| \nonumber \\ 
& \leq &  E[\tau] \|d\|_{\infty},  \label{eq:cnullbound}
\end{eqnarray}
where the last inequality follows  since $X'_t=X''_t$ if $t\ge 
\tau$. 

On the other hand, 
\[
E[\tau]= \sum_{x,y} \lambda(x,y) E[\tau|X_0=x,Y_0=y] 
\]
and since $E[\tau]<\infty$, we have that $K_1\coloneqq \max_{x,y}
E[\tau|X_0=x,Y_0=y]<\infty$ and 
\begin{equation*}
E[\tau] \leq \sum_{x\neq y}\lambda(x,y) K_1 = \Pr(X_0 \neq Y_0) K_1 \leq K_1 \rho_1(\mu_0,\zeta_0),
\end{equation*}
where the the second inequality  follows
from the fact that $\Pr(X_0 \neq Y_0) \leq \rho_1(\mu_0,\zeta_0)$ by \eqref{def:wass}.  This and
\eqref{eq:cnullbound} complete the proof of  Lemma~\ref{sim_arg_lemma}.  
\end{proof}

Under any given stationary Markov policy $\Pi\in \Pi_{WS}$ the sequence
$\{(\pi_t,Q_t)\}_{t\ge 0}$ is a $\mathcal{P}(\sX)\times \Q$-valued Markov chain
whose transition kernel is determined by $\Pi$ and the transition kernel
$P(d\pi_{t+1} | \pi_t,Q_t)$, which is given by the filtering equation
\eqref{filtre} and does not depend on $\Pi$.  As pointed out in the proof of
Proposition~\ref{thm:dc}, the transition kernel $P(d\pi_{t+1} | \pi_t,Q_t)$ is
weakly continuous. This implies that the Markov process $\{(\pi_t,Q_t)\}$ is
weak Feller, that is, the transition kernel $P(d(\pi_{t+1},Q_{t+1})|\pi_t,Q_t)$
is weakly continuous \cite[C.3~Definition]{HernandezLermaMCP}. Since every weak
Feller Markov process with a compact state space has an invariant probability
measure \cite{YukMeynTAC2010}, it follows that there exists a probability
measure $\pi^*(\Pi)$ on $\mathcal{P}(\sX)$ such that if $\pi_0$ is picked
randomly according to $\pi^*(\Pi)$, then $\{(\pi_t,Q_t)\}$ is a stationary
process. We call $\pi^*(\Pi)$ an invariant probability on ${\cal P}(\sX)$
induced by $\Pi \in \Pi_{WS}$.

Note that if the initial probability $\pi_0$ is random with distribution
$\pi^*(\Pi)$, the quantization policy $\Pi$ becomes a randomized policy since
the encoder and decoder must have access to the same random
$\pi_0$. Expectations under such policies will be denoted by $E^{\Pi}_{\pi_0\sim
  \pi^*(\Pi)}$. 

\begin{lem}\label{mainThm1}
  If the source is irreducible and aperiodic, then for any  initial distribution
  $\pi$, 
\begin{eqnarray*}
&&\inf_{\Pi \in \Pi_A} \limsup_{T\to\infty} \frac{1}{T} E_{\pi}^\Pi \left[
  \sum_{t=0}^{T-1} d(X_t,\hX_t) \right] \nonumber \\*
&& \quad \quad \geq \inf_{\Pi \in \Pi_{WS}} \limsup_{T\to\infty} \frac{1}{T}
E_{\pi_0\sim\pi^*(\Pi)}^\Pi \left[ \sum_{t=0}^{T-1} d(X_t,\hX_t).
\right] 
\end{eqnarray*}
\end{lem}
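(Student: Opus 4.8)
The plan is to run the standard vanishing-discount argument, with Lemma~\ref{sim_arg_lemma} supplying the key uniform-in-$\beta$ control that lets the two sides meet. \emph{Step 1 (Abelian lower bound).} First I would bound the left-hand side from below by scaled discounted values. Fix $\Pi\in\Pi_A$ and set $a_t:=E^\Pi_\pi[d(X_t,\hX_t)]\in[0,\|d\|_\infty]$; by Tonelli $E^\Pi_\pi\big[\sum_{t\ge0}\beta^t d(X_t,\hX_t)\big]=\sum_{t\ge0}\beta^t a_t$, and the elementary Abelian inequality gives $\limsup_{\beta\uparrow1}(1-\beta)\sum_{t\ge0}\beta^t a_t\le\limsup_{T\to\infty}\tfrac1T\sum_{t=0}^{T-1}a_t$. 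Since $\sum_{t\ge0}\beta^t a_t\ge J^\beta_\pi$, this yields
\[
\limsup_{T\to\infty}\tfrac1T E^\Pi_\pi\Big[\sum_{t=0}^{T-1}d(X_t,\hX_t)\Big]\ \ge\ \liminf_{\beta\uparrow1}(1-\beta)J^\beta_\pi\ =:\ L ,
\]
and taking the infimum over $\Pi\in\Pi_A$ makes the left-hand side of the lemma at least $L$.

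\emph{Step 2 (a good stationary policy for each $\beta$).} For each $\beta\in(0,1)$ I would use Proposition~\ref{thm:dc} — more precisely the value-iteration construction in its proof, which produces a \emph{single} $\Pi_\beta\in\Pi_{WS}$ whose value function equals the optimal discounted value, so that $J^\beta_\mu(\Pi_\beta)=J^\beta_\mu$ for every $\mu\in\mathcal{P}(\sX)$. Let $\nu_\beta:=\pi^*(\Pi_\beta)$ be the invariant probability on $\mathcal{P}(\sX)$ induced by $\Pi_\beta$ (it exists by the weak Feller property and compactness of $\mathcal{P}(\sX)\times\Q$, as recalled just before the lemma), so $\{(\pi_t,Q_t)\}$ is stationary when $\pi_0\sim\nu_\beta$. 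Using the identity $E^\Pi_\mu[d(X_t,\hX_t)]=E^\Pi_\mu[c(\pi_t,Q_t)]$ and stationarity,
\[
g_\beta:=\limsup_{T\to\infty}\tfrac1T E^{\Pi_\beta}_{\pi_0\sim\nu_\beta}\Big[\sum_{t=0}^{T-1}d(X_t,\hX_t)\Big]=E^{\Pi_\beta}_{\pi_0\sim\nu_\beta}\big[c(\pi_0,Q_0)\big];
\]
and since $(1-\beta)\sum_{t\ge0}\beta^t=1$, the same stationarity gives $g_\beta=(1-\beta)E^{\Pi_\beta}_{\pi_0\sim\nu_\beta}\big[\sum_{t\ge0}\beta^t d(X_t,\hX_t)\big]=(1-\beta)\int_{\mathcal{P}(\sX)}J^\beta_\mu\,\nu_\beta(d\mu)$, the last step by conditioning on $\pi_0$ and the uniform optimality of $\Pi_\beta$.

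\emph{Step 3 (collapse via Lemma~\ref{sim_arg_lemma}).} Because $\rho_1$ is bounded on $\mathcal{P}(\sX)$ (by $|\sX|-1$, with $\sX=\{1,\dots,|\sX|\}\subset\mathbb{R}$), Lemma~\ref{sim_arg_lemma} gives $|J^\beta_\mu-J^\beta_\pi|\le K_1\|d\|_\infty(|\sX|-1)=:C$ for all $\mu$, with $C$ independent of $\beta$. Integrating against $\nu_\beta$ yields $|g_\beta-(1-\beta)J^\beta_\pi|\le(1-\beta)C$. Since $\Pi_\beta\in\Pi_{WS}$, the right-hand side of the lemma is at most $g_\beta\le(1-\beta)J^\beta_\pi+(1-\beta)C$ for every $\beta$; letting $\beta\uparrow1$ and using $(1-\beta)C\to0$ shows the right-hand side of the lemma is at most $\liminf_{\beta\uparrow1}(1-\beta)J^\beta_\pi=L$. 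Combined with Step~1, this gives the claimed inequality.

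I expect the main obstacle to be Step~2: justifying that a single stationary policy can be taken discounted-optimal for \emph{all} initial beliefs $\mu$ (so that the Lemma~\ref{sim_arg_lemma} bound can be pushed under the integral against $\nu_\beta$), together with the bookkeeping identity equating a stationary policy's average cost from its invariant measure with $(1-\beta)$ times its discounted cost from that same measure. The genuinely quantitative ingredient — a bound on $|J^\beta_\mu-J^\beta_\zeta|$ that does not blow up as $\beta\uparrow1$ — is precisely what Lemma~\ref{sim_arg_lemma} provides; without it the term $(1-\beta)C$ would not vanish and the two sides would fail to meet.
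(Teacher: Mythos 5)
Your proof is correct, and it is at heart the same vanishing-discount argument the paper uses: both rely on the Abelian lemma to move between average and discounted costs, on Lemma~\ref{sim_arg_lemma} to control the dependence of the discounted value on the initial distribution (this is precisely what makes the $(1-\beta)$-factor absorb the discrepancy), and on the invariant measure $\pi^*(\Pi_\beta)$ of a discounted-optimal stationary policy. The bookkeeping, however, is organized differently and a bit more cleanly. The paper chains six inequalities, each introducing a fresh $\epsilon$, and invokes the Abelian lemma twice (once to pass from the average cost of a near-optimal $\Pi_n$ down to a discounted cost, and once more near the end to pass from the discounted cost under $\pi^*(\Pi_\beta)$ back to a $\liminf$ of Ces\`aro averages). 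You instead isolate the single intermediate quantity $L=\liminf_{\beta\uparrow1}(1-\beta)J^\beta_\pi$ and show LHS $\ge L$ and RHS $\le L$ separately, which removes the $\epsilon$-cascade. Your reverse direction also exploits the exact identity $g_\beta=(1-\beta)\int J^\beta_\mu\,\nu_\beta(d\mu)$ available under the stationary initial law $\nu_\beta$ (the time-average and the normalized discounted sum of a stationary sequence coincide), so you do not need the Abelian lemma there at all. Both approaches hinge on the same nontrivial input — value iteration providing a \emph{single} stationary $\Pi_\beta$ that is discounted-optimal from every initial belief, which you correctly flag as the point requiring care, and which is justified by Theorem~\ref{thm:iterative} exactly as in the paper's Proposition~\ref{thm:dc}.
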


\begin{proof}
We will need the following well known Abelian result.

\begin{lem}[\protect{\cite[Lemma~5.3.1]{HernandezLermaMCP}}]
\label{lem:abel}
Let $\{c_t\}_{t \geq 0}$ be a sequence of nonnegative numbers.  Then
\begin{align*}
\liminf_{T\to\infty} \frac{1}{T} \sum_{t=0}^{T-1} c_t &\leq \liminf_{\beta \uparrow 1} (1-\beta) \sum_{t=0}^\infty \beta^t c_t \\
&\leq \limsup_{\beta \uparrow 1} (1-\beta) \sum_{t=0}^\infty \beta^t c_t \\
&\leq \limsup_{T\to\infty} \frac{1}{T} \sum_{t=0}^{T-1} c_t.
\end{align*}
\end{lem}

Let $\{\Pi_k\}$ be a sequence of policies in $\Pi_A$ such that $
\lim_{k\to\infty} J_{\pi}(\Pi_k) = J_{\pi}$ and fix $n>0$ such that
\begin{equation}
\label{eqn:firstepsilon}
J_{\pi} \geq J_{\pi}(\Pi_n) - \epsilon.
\end{equation}
Applying  Lemma \ref{lem:abel} with $c_t=E^{\Pi_n}_{\pi}\bigl[d(X_t,\hat{X}_t)\bigr]$, there exists $\beta_\epsilon\in (0,1)$ such that for
all $\beta\in (\beta_{\epsilon},1)$
\begin{align}
J_{\pi} \geq J_{\pi}(\Pi_n) - \epsilon &\geq (1-\beta) E_{\pi}^{\Pi_n} \left[
  \sum_{t=0}^{\infty} \beta^t d(X_t,\hX_t) \right] -
2\epsilon  \nonumber  \\ 
&\geq \min_{\Pi \in \Pi_{WS}} (1-\beta) E_{\pi}^{\Pi} \left[
  \sum_{t=0}^{\infty} \beta^t d(X_t,\hX_t) \right] - 2\epsilon, \label{eqn:epsilon2} 
\end{align}
where the minimum exists by Proposition~\ref{thm:dc}.

Now consider the case where for some $\Pi\in \Pi_{WS}$ the initial measure
$\pi_0$ is distributed according to $\pi_0 \sim \pi^*(\Pi)$.  For ease of
interpretation, let $X'_t$ denote the source process with $X'_0 \sim \pi$, let
$X''_t$ be a process with $X''_0 \sim \pi_0$ for some fixed $\pi_0$, and
in addition let $X'_t$ and $X''_t$ be coupled as in Lemma~\ref{sim_arg_lemma}.
Then for any $\beta \in (0,1)$,
\begin{align}
\bigg| \min_{\Pi\in\Pi_{WS}} (1-\beta) E^{\Pi}_{\pi}\bigg[\sum_{t=0}^{\infty} \beta^t
  & d(X'_t,\hX'_t)\bigg] - \min_{\Pi\in\Pi_{WS}} (1-\beta) E^{\Pi}_{\pi_0}\bigg[\sum_{t=0}^{\infty} \beta^t
  d(X''_t,\hX''_t)\bigg] \bigg| \nonumber \\
& \leq  (1-\beta) E[\tau] \|d\|_{\infty} \nonumber \\
& \le (1-\beta)K_1  \|d\|_{\infty} \rho_1(\pi,\pi_0),   \label{eqn:2dists1} 
\end{align}
where $\tau = \min\{t \ge  0: X'_t =X''_t\}$, and where the  first inequality
follows from the coupling of the Markov 
chains as in Lemma~\ref{sim_arg_lemma} (see \eqref{eq:cnullbound}) and the
second also follows from the proof of Lemma~\ref{sim_arg_lemma}. 
Since $\rho_1(\pi,\pi_0)$ is  upper bounded by $|\sX|$ (see
\eqref{def:wass}) for any $\pi$ and $\pi_0$, we obtain 
\begin{align}
\bigg| \min_{\Pi\in\Pi_{WS}} (1-\beta) E^{\Pi}_{\pi}\bigg[\sum_{t=0}^{\infty} \beta^t
  & d(X'_t,\hX'_t)\bigg] - \inf_{\Pi\in\Pi_{WS}} (1-\beta)
  E^{\Pi}_{\pi_0\sim\pi^*(\Pi)}\bigg[\sum_{t=0}^{\infty} \beta^t 
  d(X''_t,\hX''_t)\bigg] \bigg| \nonumber \\
& \leq (1-\beta) K_2 \|d\|_{\infty}, \label{eqn:2dists} 
\end{align}
where $K_2=K_1|\sX|$. 

Choosing $\bar{\beta}$ such that \eqref{eqn:2dists} with $\beta=\bar{\beta}$ is
less than $\epsilon$, and combining the preceding bound with
\eqref{eqn:epsilon2} yields  for any $\beta\in
(\max\{\beta_{\epsilon},\bar{\beta}\}, 1)$,
\begin{align}
J_{\pi} &\geq \min_{\Pi\in\Pi_{WS}} (1-\beta) E_{\pi}^{\Pi} \left[
  \sum_{t=0}^{\infty} \beta^t d(X_t,\hX_t) \right] - 2\epsilon   \nonumber \\*
&\geq  \inf_{\Pi\in\Pi_{WS}}  (1-\beta)  E_{\pi_0\sim \pi^*(\Pi)}^{\Pi} \left[
  \sum_{t=0}^{\infty} \beta^t d(X_t,\hX_t) \right]  -
3\epsilon  \label{eq:inf-min-beta} \\ 
&\geq  (1-\beta)  E_{\pi_0\sim \pi^*(\Pi_{\beta})}^{\Pi_{\beta}} \left[
  \sum_{t=0}^{\infty} \beta^t d(X_t,\hX_t) \right]  - 4\epsilon  \nonumber \\ 
&\ge  \liminf_{T\to\infty} \frac{1}{T} E_{\pi_0\sim \pi^*(\Pi_{\beta})}^{\Pi_{\beta}} \left[
  \sum_{t=0}^{T-1} d(X_t,\hX_t) \right] - 5\epsilon   \nonumber \\ 
& = \limsup_{T\to\infty} \frac{1}{T} E_{\pi_0\sim \pi^*(\Pi_{\beta})}^{\Pi_{\beta}}  \left[
  \sum_{t=0}^{T-1} c(\pi_t,Q_t) \right] - 5\epsilon,   \nonumber
\end{align}
where the $\Pi_{\beta}\in \Pi_{WS}$ is chosen so that it achieves the infimum in
\eqref{eq:inf-min-beta} within $\epsilon$, and where the fourth inequality holds
by Lemma~\ref{lem:abel} if $\beta\in (\max\{\beta_{\epsilon},\bar{\beta}\}, 1)$
is large enough. Finally, the last equality follows since 
$\pi^*(\Pi_{\beta})$ is invariant and hence $\{(\pi_t, Q_t)\}$ is a stationary
process. Thus we obtain
\begin{align*}
J_{\pi} & \geq \limsup_{T\to\infty} \frac{1}{T} E_{\pi_0\sim \pi^*(\Pi_{\beta})}^{\Pi'}
\left[ \sum_{t=0}^{T-1} d(X_t,\hX_t) \right] - 5\epsilon \\ 
& \geq \inf_{\Pi\in\Pi_{WS}} \limsup_{T\to\infty} \frac{1}{T} E_{\pi_0\sim
  \pi^*(\Pi)}^{\Pi} \left[ \sum_{t=0}^{T-1} d(X_t,\hX_t) \right] - 5\epsilon, 
\end{align*}
where $\epsilon >0$ is arbitrary, which completes the proof. 
\end{proof}

\begin{lem}\label{mainThm2}
If the source is irreducible and aperiodic, then for any initial
distribution $\pi$, 
\begin{eqnarray*}
&&\inf_{\Pi \in \Pi_{WS}} \limsup_{T\to\infty} \frac{1}{T} E_{\pi}^\Pi \left[ \sum_{t=0}^{T-1} d(X_t,\hX_t) \right] \\*
&& \quad \quad = \inf_{\Pi \in \Pi_{WS}} \limsup_{T\to\infty} \frac{1}{T}
E_{\pi_0\sim \pi^*(\Pi)}^\Pi \left[ \sum_{t=0}^{T-1} d(X_t,\hX_t) \right].
\end{eqnarray*}
\end{lem}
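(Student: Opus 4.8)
The plan is to split the claimed equality into its two inequalities, observe that ``$\ge$'' is immediate, and devote the work to ``$\le$'', which I would obtain by a vanishing-discount argument built on Lemma~\ref{sim_arg_lemma}. For ``$\ge$'': since $\Pi_{WS}\subset\Pi_A$, the left-hand infimum is at least $\inf_{\Pi\in\Pi_A}J_\pi(\Pi)$, and Lemma~\ref{mainThm1} bounds this below by the right-hand side of the identity; so it suffices to show the left-hand side does not exceed $B:=\inf_{\Pi\in\Pi_{WS}}\limsup_{T}\frac{1}{T}E^{\Pi}_{\pi_0\sim\pi^*(\Pi)}\bigl[\sum_{t=0}^{T-1}d(X_t,\hX_t)\bigr]$.

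For ``$\le$'' I would run the vanishing-discount/ACOE argument for the controlled chain $\{\pi_t\}$ with control set $\Q$ and per-stage cost $c$. Fix a reference $\mu_0\in\mathcal{P}(\sX)$ and set $h_\beta(\mu):=J^\beta_\mu-J^\beta_{\mu_0}$. The crucial input is Lemma~\ref{sim_arg_lemma}, which makes $\{h_\beta\}_{\beta\in(0,1)}$ uniformly bounded and uniformly $\rho_1$-Lipschitz; as $(\mathcal{P}(\sX),\rho_1)$ is a compact metric space and $\{(1-\beta)J^\beta_{\mu_0}\}$ is bounded, Arzel\`a--Ascoli yields $\beta_k\uparrow1$ with $h_{\beta_k}\to h$ uniformly ($h$ Lipschitz, hence continuous and bounded) and $(1-\beta_k)J^{\beta_k}_{\mu_0}\to\rho$, whence $(1-\beta_k)J^{\beta_k}_\mu\to\rho$ for every $\mu$. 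Passing to the limit along $\beta_k$ in the discounted Bellman equation (available from Proposition~\ref{thm:dc}; the interchange is legitimate because $\Q$ is finite, $h_{\beta_k}\to h$ uniformly, and $P(d\mu'|\mu,Q)$ is weakly continuous) gives the average-cost optimality equation $h(\mu)+\rho=\min_{Q\in\Q}\bigl[c(\mu,Q)+\int h(\mu')P(d\mu'|\mu,Q)\bigr]$, and a measurable selection of the minimizing quantizer map defines $\Pi^*=\{\heta^*\}\in\Pi_{WS}$ attaining equality. Telescoping this identity along the chain generated by $\Pi^*$ started from an arbitrary $\pi$, together with boundedness of $h$, gives $\frac{1}{T}E^{\Pi^*}_\pi\bigl[\sum_{t=0}^{T-1}d(X_t,\hX_t)\bigr]=\rho+O(1/T)$, so $J_\pi(\Pi^*)=\rho$ for every $\pi$.

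It then remains to check $\rho\le B$. Fix $\Pi\in\Pi_{WS}$ with induced invariant probability $\pi^*(\Pi)$ and let $\bar c(\Pi)$ be the corresponding term in $B$. Stationarity of $\{(\pi_t,Q_t)\}$ makes $E^{\Pi}_{\pi_0\sim\pi^*(\Pi)}\bigl[\sum_{t=0}^{\infty}\beta^t d(X_t,\hX_t)\bigr]=\bar c(\Pi)/(1-\beta)$, and this upper bounds the optimal randomized discounted cost for that initial law, namely $\int J^\beta_\zeta\,\pi^*(\Pi)(d\zeta)$. Combining with $J^\beta_\pi\le J^\beta_\zeta+K_1\|d\|_{\infty}\rho_1(\pi,\zeta)\le J^\beta_\zeta+K_1\|d\|_{\infty}|\sX|$ from Lemma~\ref{sim_arg_lemma} (integrated against $\pi^*(\Pi)$) gives $(1-\beta)J^\beta_\pi\le\bar c(\Pi)+(1-\beta)K_1\|d\|_{\infty}|\sX|$; letting $\beta=\beta_k$ and $k\to\infty$ yields $\rho\le\bar c(\Pi)$, and taking the infimum over $\Pi\in\Pi_{WS}$ gives $\rho\le B$. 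Hence $\inf_{\Pi\in\Pi_{WS}}J_\pi(\Pi)\le J_\pi(\Pi^*)=\rho\le B$, which is ``$\le$''.

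The step I expect to be the main obstacle is the vanishing-discount passage itself: showing that a subsequential limit $h$ of the relative discounted value functions exists, is Lipschitz, and solves the average-cost optimality equation. This rests entirely on the uniform equicontinuity supplied by Lemma~\ref{sim_arg_lemma} (which substitutes for the absolute-continuity hypotheses that are unavailable here because of quantization), and one must also handle the limit interchange in the Bellman equation and the measurable selection of an optimal stationary quantizer map---both routine here only because $\Q$ is finite and $P(d\mu'|\mu,Q)$ is weakly continuous. A secondary point needing care is that the telescoped identity forces $J_\pi(\Pi^*)=\rho$ with no recurrence hypothesis on the belief chain, which works precisely because $h$ is bounded.
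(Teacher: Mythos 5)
Your proof is correct, but it takes a considerably longer and genuinely different route than the paper's. The paper disposes of Lemma~\ref{mainThm2} with a two-line coupling argument: since, by the construction in the proof of Lemma~\ref{sim_arg_lemma}, the two source realizations started from $\pi$ and from $\pi_0$ agree after the coupling time $\tau$ with $E[\tau]\le K_2$, the difference of the cumulative (undiscounted, finite-horizon) costs under \emph{any} policy is bounded by the $T$-independent constant $E[\tau]\|d\|_\infty\le K_2\|d\|_\infty$; dividing by $T$ and letting $T\to\infty$ forces the two infima to coincide, with the one-sided inequality already supplied by Lemma~\ref{mainThm1}. You instead carry out the entire vanishing-discount/ACOE program inside this lemma: Arzel\`a--Ascoli extraction of a uniform subsequential limit $h$ of the relative value functions $h_\beta$, passage to the limit in the discounted Bellman equation, measurable selection of a stationary minimizer $\Pi^*$, telescoping to obtain $J_\pi(\Pi^*)=\rho$ for every $\pi$, and finally the bound $\rho\le B$ via the stationarity identity $E^{\Pi}_{\pi_0\sim\pi^*(\Pi)}\bigl[\sum\beta^t d\bigr]=\bar c(\Pi)/(1-\beta)$ integrated against Lemma~\ref{sim_arg_lemma}. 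All of these steps check out (finiteness of $\Q$, weak continuity of the kernel, and uniform boundedness/equicontinuity from Lemma~\ref{sim_arg_lemma} do the necessary work), and the argument is not circular since it relies only on Lemma~\ref{sim_arg_lemma}, Lemma~\ref{mainThm1}, and Proposition~\ref{thm:dc}. What you gain is that, as a byproduct, you essentially already establish the existence of an optimal stationary policy in $\Pi_{WS}$ and the constancy of $J_\pi(\Pi^*)$ in $\pi$ — i.e., much of Theorem~\ref{mainThm} — at this stage; what you pay is that you reconstruct inside a small lemma the ACOE machinery (essentially Theorem~\ref{thm3}) that the paper invokes only once, after Lemma~\ref{keyConnection}. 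The paper's proof is the sharper tool here precisely because the coupling gives a uniform bound on the \emph{total} cost difference, which is immediately annihilated by the $1/T$ normalization.
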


\begin{proof}
First note that by  Lemma \ref{mainThm1},
\begin{eqnarray*}
&&\inf_{\Pi \in \Pi_{WS}} \limsup_{T\to\infty} \frac{1}{T} E_{\pi}^\Pi \left[ \sum_{t=0}^{T-1} d(X_t,\hX_t) \right] \\*
&& \quad \quad \geq \inf_{\Pi \in \Pi_{WS}} \limsup_{T\to\infty} \frac{1}{T}
E_{\pi_0\sim \pi^*(\Pi)}^\Pi \left[ \sum_{t=0}^{T-1} d(X_t,\hX_t) \right].
\end{eqnarray*}

Now apply the argument that led to the bounds \eqref{eqn:2dists1} and
\eqref{eqn:2dists}   to obtain 
\begin{eqnarray*}
\lefteqn{ \Bigg| \inf_{\Pi \in \Pi_{WS}} \limsup_{T\to\infty} \frac{1}{T} E_{\pi}^\Pi
\left[ \sum_{t=0}^{T-1} d(X_t,\hX_t) \right]  } \qquad\qquad \\*
 & & \mbox{}- \inf_{\Pi \in \Pi_{WS}}
\limsup_{T\to\infty} \frac{1}{T} E_{\pi_0\sim \pi^*(\Pi)}^\Pi \left[
  \sum_{t=0}^{T-1} d(X_t,\hX_t) \right] \Bigg| \hspace{0cm} \\
&  \leq & \limsup_{T\to\infty} \frac{1}{T}  E_{\pi_0\sim \pi^*(\Pi)}[\tau]
\|d\|_{\infty}  \\
&\le & \limsup_{T\to\infty} \frac{1}{T} K_2\|d\|_{\infty}  =0.
\end{eqnarray*}
\end{proof}

The following important  result immediately follows from Lemmas \ref{mainThm1} and
\ref{mainThm2}. 
\begin{lem}\label{keyConnection}
If the source is irreducible and aperiodic, then for any initial distribution $\pi_0$,
\begin{eqnarray}
&&\inf_{\Pi \in \Pi_A} \limsup_{T\to\infty} \frac{1}{T} E_{\pi_0}^\Pi \left[ \sum_{t=0}^{T-1} d(X_t,\hX_t) \right] \nonumber \\*
&& \quad \quad = \inf_{\Pi \in \Pi_{WS}} \limsup_{T\to\infty} \frac{1}{T} E_{\pi_0}^\Pi \left[ \sum_{t=0}^{T-1} d(X_t,\hX_t) \right]. \label{equalInfima}
\end{eqnarray}
\end{lem}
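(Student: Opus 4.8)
The plan is to split the claimed identity~\eqref{equalInfima} into two inequalities, one of which is essentially trivial and the other an immediate composition of the two preceding lemmas; as the surrounding text already signals, there is no genuinely new argument needed here.

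First I would dispose of the inequality $\inf_{\Pi \in \Pi_A} J_{\pi_0}(\Pi) \le \inf_{\Pi \in \Pi_{WS}} J_{\pi_0}(\Pi)$, which is immediate: by Definition~\ref{WVdef} every stationary Walrand--Varaiya policy is in particular an admissible quantization policy, so $\Pi_{WS}\subseteq\Pi_W\subseteq\Pi_A$, and the infimum of a functional over a larger class cannot exceed its infimum over a subclass.

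For the reverse inequality I would simply chain Lemmas~\ref{mainThm1} and \ref{mainThm2}, both applied with the free initial distribution taken to be the given $\pi_0$. Lemma~\ref{mainThm1} gives
\[
\inf_{\Pi \in \Pi_A} \limsup_{T\to\infty} \frac{1}{T} E_{\pi_0}^\Pi \Bigl[ \sum_{t=0}^{T-1} d(X_t,\hX_t) \Bigr]
\;\ge\; \inf_{\Pi \in \Pi_{WS}} \limsup_{T\to\infty} \frac{1}{T} E_{\pi_0\sim\pi^*(\Pi)}^\Pi \Bigl[ \sum_{t=0}^{T-1} d(X_t,\hX_t) \Bigr],
\]
and Lemma~\ref{mainThm2} identifies the right-hand side with $\inf_{\Pi \in \Pi_{WS}} \limsup_{T\to\infty}\frac{1}{T}E_{\pi_0}^\Pi\bigl[\sum_{t=0}^{T-1}d(X_t,\hX_t)\bigr]$. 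Combining this chain with the trivial inequality from the previous paragraph yields the equality~\eqref{equalInfima} for every $\pi_0$.

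I do not expect any obstacle: all the substantive analysis has already been absorbed into Lemmas~\ref{mainThm1} and \ref{mainThm2} (which themselves rest on the coupling estimate of Lemma~\ref{sim_arg_lemma} and the Abelian Lemma~\ref{lem:abel}). The only points worth a second glance are the inclusion $\Pi_{WS}\subseteq\Pi_A$, which is part of Definition~\ref{WVdef}, and the observation that the ``any initial distribution'' hypotheses in Lemmas~\ref{mainThm1}--\ref{mainThm2} permit specializing to the prescribed $\pi_0$.
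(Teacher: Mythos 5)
Your proposal is correct and matches the paper's approach: the paper states that Lemma~\ref{keyConnection} ``immediately follows from Lemmas~\ref{mainThm1} and \ref{mainThm2},'' and your write-up simply makes explicit the two pieces being combined (the trivial inclusion $\Pi_{WS}\subseteq\Pi_A$ for one direction, and the chain Lemma~\ref{mainThm1}~$\Rightarrow$~Lemma~\ref{mainThm2} for the other).
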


\noindent\emph{Remark.} This lemma is crucial because it shows that without any loss
we can restrict the search for optimal quantization policies to the set
$\Pi_{W}$. Since the filtering equation \eqref{filtre} leads to a controlled
Markov chain only for policies in $\Pi_W$, this lemma allows us to apply
controlled Markov chain techniques in the study of the the average distortion
problems. The rigorous justification of this fact is one of the main
contributions of this paper.

\medskip

Note that Lemma~\ref{keyConnection} immediately implies the first statement of
Theorem~\ref{mainThm} once we can show that the infimum in (\ref{equalInfima})
is actually a minimum. This will be done by invoking the ACOE for controlled
Markov chains.  To show that the infimum is achieved by a stationary and
deterministic Markov policy $\Pi \in \Pi_{WS}$ we make use of Theorem~\ref{thm3}
in the Appendix.  To do this we have to verify that the conditions of the
theorem are satisfied with $\sZ=\mathcal{P}(\sX)$, $\sA=\Q$, $c(z,a)= c(\pi,Q)$,
and $K(dz'|z,a)= P(\pi'|\pi,Q)$. We have already shown in the proof of
Proposition~\ref{thm:dc} that conditions (i)---(iii) hold.  Since
$\mathcal{P}(\sX)$ is the standard probability simplex in $\R^{|\sX|}$ and $\Q$
is a finite set, condition (iv) clearly holds. Finally, condition (v) holds
since the family of functions
\[
\big\{h_{\beta}(\zeta)\coloneqq   J^{\beta}_{\zeta} - J^{\beta}_{\zeta_0}: \beta\in (0,1)\big\}
\]
for some arbitrary but fixed $\zeta_0\in \mathcal{P}(\sX) $  is equicontinuous
on $\mathcal{P}(\sX)$  by  Lemma~\ref{sim_arg_lemma} which states that 
\begin{equation}
\label{eq:unifbound}
|h_{\beta}(\zeta)-h_{\beta}(\zeta')|= \big|J^{\beta}_{\zeta} - J^{\beta}_{\zeta'}\big|  \leq
K_1 \|d\|_{\infty} 
\rho_1(\zeta,\zeta').
\end{equation} 
Thus we can apply Theorem~\ref{thm3} to deduce the existence of a policy in $\Pi_{WS}$ achieving the
minimum in \eqref{equalInfima}. This completes the proof of the first statement in
Theorem~\ref{mainThm}. 

To prove the second statement \eqref{eq:convrate} we use the result in
\eqref{eq:convrate-triplet} in the Appendix. Note that by
Lemma~\ref{sim_arg_lemma} we have for all $\beta\in (0,1)$ and $\zeta\in
\mathcal{P}(\sX)$
\[
|h_{\beta}(\zeta)|= \big|J^{\beta}_{\zeta} - J^{\beta}_{\zeta'}\big|  \leq
\frac{K}{2} , 
\]
where 
\[
K\coloneqq 2K_1 \|d\|_{\infty} 
\rho_1(\zeta,\zeta_0) \le 2 K_1 \|d\|_{\infty} |\sX|.
\]
Thus equation \eqref{eq:convrate-triplet} implies,  with $z_0=\pi_0$,  $g^*=J_{\pi_0}$,
and $\Pi^*$ being the  optimal policy in $\Pi_{WS}$ achieving the minimum
in \eqref{equalInfima}, that 
\[
J_{\pi_0}(\Pi^*,T) - J_{\pi_0} \le \frac{K}{T}
\]
as claimed. \hfill \IEEEQED

\section{Zero-Delay Coding over a Noisy Channel with Feedback}

\label{secnoisy}

In this section, we briefly describe  the extension of our main results to zero 
delay lossy coding over  a noisy
channel. As in Section~\ref{sec:problemdefinition}, the encoder processes the
observed information source without delay. It is assumed that the 
source $\{X_t\}_{t\ge 0}$ is a discrete time Markov process with finite alphabet 
$\sX$. The encoder encodes  the source samples without delay  and
transmits the encoded versions to a receiver over a discrete channel with input
alphabet $ \sM=\{1,\ldots,M\}$ and output alphabet $\sM'\coloneqq\{1,\ldots,M'\}$, where $M$ and $M'$ are positive integers.

In contrast with the setup described in Section~\ref{sec:problemdefinition},
here the channel between the encoder and decoder is a discrete and memoryless \emph{noisy}
channel characterized by the transition probability $T(b|a)=\Pr(q'=b|q=a)$,
$a\in \sM$, $b\in \sM'$.

We assume that the encoder has access to the previous channel outputs in the
form of \emph{feedback}. In particular, the encoder is specified by a
\emph{quantization policy} $\Pi$, which is a sequence of functions
$\{\eta_t\}_{t\ge 0}$ with $\eta_t: \sM^t \times (\sM')^t \times \sX^{t+1} \to
\sM$.  At time $t$, the encoder transmits the $\sM$-valued message
\[
  q_t=\eta_t(I_t), 
\]
where $I_0=X_0$,  $I_t=( q_{[0,t-1]}, q'_{[0,t-1]}, X_{[0,t]})$ for $t \geq 1$,
and  $q'_t$ is the received (noisy) version of $q_t$. The collection of all
such zero delay policies is called the 
set of admissible quantization policies and is denoted by $\Pi_A$.

Upon receiving $q'_t$, the receiver generates the reconstruction, $\hX_t$, also
without delay. A zero delay receiver policy is a sequence of functions
$\gamma=\{\gamma_t\}_{t\ge 0}$ of type $\gamma_t : (\sM')^{t+1} \to \hat{\sX}$,
where $\hat{\sX}$ is the finite reproduction alphabet.  Thus
\[
\hX_t=\gamma_t(q'_{[0,t]})   \quad \text{for all $t\ge 0$.}
\]
Note that, due to the presence of feedback, the encoder  also has access to
$q'_{[0,t]}$ at time $t+1$. The finite and infinite horizon coding problems are
defined analogously to the noiseless case.

The following result is a known extension of Witsenhausen's structure theorem
\cite{Witsenhausen}. 

\begin{thm}[\protect{\cite[Theorem~10.7.1]{YukselBasarBook}}] \label{witsenhausenTheorem2}
  For the problem of transmitting $T$ samples of a  Markov source over a noisy
  channel with feedback, any zero delay quantization policy $\Pi=\{\eta_t\}$ can
  be replaced, without any loss in performance, by a policy
  $\hat{\Pi}=\{\heta_t\}$ which only uses $q'_{[0,t-1]}$ and $X_t$ to generate
  $q_t$, i.e., such that $q_t=\heta_t(q'_{[0,t-1]},X_t)$ for all
  $t=1,\ldots,T-1$.
\end{thm}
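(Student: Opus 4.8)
The plan is to prove the theorem by backward induction on the time index, adapting the argument behind Witsenhausen's structure result (Theorem~\ref{witsenhausenTheorem}) to the noisy-channel-with-feedback setting. Since each $q_s=\eta_s(I_s)$ is produced deterministically from the available information, a short preliminary induction shows that $q_{[0,t-1]}$ is a deterministic function of $(q'_{[0,t-1]},X_{[0,t-1]})$, so the encoder's effective information at time $t$ is the pair $(q'_{[0,t-1]},X_{[0,t]})$; the task is to show that retaining only $(q'_{[0,t-1]},X_t)$ loses nothing. I would fix an arbitrary admissible policy $\Pi=\{\eta_t\}$ and an arbitrary receiver policy $\gamma=\{\gamma_t\}$, and show that the encoder functions can be replaced one at a time, from $t=T-1$ down to $t=1$, by maps $\heta_t$ of $(q'_{[0,t-1]},X_t)$ without increasing $E^{\Pi,\gamma}_{\pi_0}\bigl[\sum_{s=0}^{T-1} d(X_s,\hX_s)\bigr]$. (No modification is needed at $t=0$, where $I_0=X_0$ already.)

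For the inductive step, suppose $\eta_{t+1},\dots,\eta_{T-1}$ have already been replaced by maps $\heta_s$ depending only on $(q'_{[0,s-1]},X_s)$. Modifying $\eta_t$ changes only $q_t$ and hence, through the channel and the feedback loop, only the downstream quantities $q'_t,q'_{t+1},\dots$ and the reconstructions $\hX_t,\hX_{t+1},\dots$; the cost terms for $s<t$ are untouched, as is the joint law of $(X_{[0,T-1]},q'_{[0,t-1]})$. The key claim is that the conditional cost-to-go
\[
C_t(\bar q,\bar x_t,a)\coloneqq E\bigl[\textstyle\sum_{s=t}^{T-1} d(X_s,\hX_s)\,\big|\,(q'_{[0,t-1]},X_{[0,t]},q_t)=(\bar q,\bar x,a)\bigr]
\]
depends on $\bar x=(\bar x_0,\dots,\bar x_t)$ only through $\bar x_t$. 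This is because (a) the source is Markov, so the conditional law of $X_{[t+1,T-1]}$ given $X_{[0,t]}$ depends only on $X_t$; (b) the channel is memoryless, so each $q'_s$ depends only on $q_s$; (c) the receiver produces $\hX_s$ from $q'_{[0,s]}$ alone; and (d) by the induction hypothesis, each later message $q_s$, $s>t$, is generated by $\heta_s$ from $(q'_{[0,s-1]},X_s)$, so the joint law of $(q'_t,\dots,q'_{T-1},X_{t+1},\dots,X_{T-1})$ — hence of $(\hX_t,\dots,\hX_{T-1})$ — given $(\bar q,\bar x_t,a)$ is completely determined, with no residual dependence on $\bar x_{[0,t-1]}$. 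Taking $\heta_t(\bar q,\bar x_t)$ to be a value of $a$ that minimizes $C_t(\bar q,\bar x_t,a)$ (ties broken arbitrarily) gives $C_t(\bar q,\bar x_t,\heta_t(\bar q,\bar x_t))\le C_t(\bar q,\bar x_t,\eta_t(\bar q,\bar x))$ for every realization, and since the law of $(q'_{[0,t-1]},X_{[0,t]})$ is the same under $\eta_t$ and $\heta_t$, averaging this inequality shows the expected cost does not increase. The base case $t=T-1$ is immediate, since $C_{T-1}(\bar q,\bar x_{T-1},a)=\sum_{b\in\sM'} T(b|a)\,d(\bar x_{T-1},\gamma_{T-1}(\bar q,b))$ manifestly depends only on $(\bar q,\bar x_{T-1},a)$.

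I expect the main obstacle to be the bookkeeping in point (d): one must check carefully that it is precisely the restriction of the \emph{later} encoders to the Markov form that prevents $C_t$ from secretly reintroducing a dependence on the past source symbols $X_{[0,t-1]}$ — if some $\eta_s$ with $s>t$ still used $X_{[0,s]}$, then $C_t$ would in general depend on $\bar x_{[0,t-1]}$ and the reduction would fail. This is the reason the induction must run backward in time, so that the hypothesis on $\heta_{t+1},\dots,\heta_{T-1}$ is already available when $\eta_t$ is treated. Finally, since the whole argument is performed for a \emph{fixed} receiver $\gamma$, it holds a fortiori when the receiver is re-optimized for the new encoder; this produces a policy of the asserted form whose performance is no worse than that of $\Pi$, for every $\Pi\in\Pi_A$, which is the claim.
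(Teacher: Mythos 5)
The paper does not contain a proof of this statement; it is quoted as a known result, cited to \cite[Theorem~10.7.1]{YukselBasarBook}, so there is no in-paper argument to compare against. Your backward-induction proof is the standard Witsenhausen-style argument adapted to a noisy memoryless channel with feedback, and it is correct: the preliminary reduction (observing that with feedback $q_{[0,t-1]}$ is a deterministic function of $(q'_{[0,t-1]},X_{[0,t-1]})$, so the effective information is $(q'_{[0,t-1]},X_{[0,t]})$) is exactly the right first step, and your items (a)--(d) correctly isolate why the cost-to-go $C_t$ sheds its dependence on $\bar x_{[0,t-1]}$ once the downstream encoders are already in Markov form, which is also why the induction must run backward.
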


Given a quantization policy $\Pi$, for all $t\ge 1$ let $\pi_t
\in {\cal P}(\sX)$ be the  conditional probability defined by
 \[
\pi_t(A)\coloneqq \Pr(X_t\in A | q'_{[0,t-1]})
\]
for any set $A\subset \sX$.

The following result is due to Walrand and Varaiya.

\begin{thm}[\cite{WalrandVaraiya}] \label{WalrandVaraiyaTheorem1} 
 For the problem of transmitting $T$ samples of a  Markov source over a noisy
  channel with feedback, any zero delay
  quantization policy can be replaced, without any loss in performance, by a
  policy which at any time $t= 1, \ldots,T-1$ only uses the conditional
  probability measure $\pi_t=P(dx_t|q'_{[0,t-1]})$ and the state $X_t$ to
  generate $q_t$. In other words, at time $t$ such a policy uses $\pi_t$ to
  select a quantizer $Q_t:\sX \to \sM$ and then $q_t$ is generated as
  $q_t=Q_t(X_t)$.
\end{thm}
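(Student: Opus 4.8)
The plan is to reproduce, with the obvious channel-theoretic modifications, the dynamic-programming argument of Walrand and Varaiya, starting from the reduction already supplied by Theorem~\ref{witsenhausenTheorem2}. By that theorem we may restrict attention to encoders of the form $q_t=\heta_t(q'_{[0,t-1]},X_t)$, so that at each time $t$ the map $x\mapsto \heta_t(q'_{[0,t-1]},x)$ is a quantizer $Q_t\in\Q$ depending only on the feedback string $q'_{[0,t-1]}$; for any such policy $\pi_t=P(dx_t|q'_{[0,t-1]})$ is well defined and is itself a deterministic function of $q'_{[0,t-1]}$. First I would derive the Bayesian filtering recursion for $\pi_t$ in this setting: conditioning on $q'_{[0,t-1]}$ and using $\Pr(q'_t\mid X_t=x_t,q'_{[0,t-1]})=T(q'_t\mid Q_t(x_t))$ together with the Markov property of $\{X_t\}$ (which makes $X_{t+1}$ conditionally independent of the past channel outputs given $X_t$), one obtains
\[
\pi_{t+1}(x_{t+1})=\frac{\sum_{x_t}\pi_t(x_t)\,T(q'_t\mid Q_t(x_t))\,P(x_{t+1}\mid x_t)}{\sum_{x_t}\pi_t(x_t)\,T(q'_t\mid Q_t(x_t))}.
\]
This shows that $\pi_{t+1}$ is a fixed measurable function of $(\pi_t,Q_t,q'_t)$ and that the conditional law of $q'_t$ given the past depends only on $(\pi_t,Q_t)$, namely $\Pr(q'_t=b\mid\pi_t,Q_t)=\sum_x\pi_t(x)T(b\mid Q_t(x))$. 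Hence $\{\pi_t\}$ is a $\P(\sX)$-valued controlled Markov chain with $\Q$-valued control $\{Q_t\}$ whose transition kernel $P(d\pi_{t+1}\mid\pi_t,Q_t)$ does not depend on the encoding policy (compare \eqref{filtre}).

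The next step concerns the decoder and the per-stage cost. Since $\hX_t=\gamma_t(q'_{[0,t]})$, for any fixed encoder the optimal decoder outputs a reproduction $\hx$ minimizing $E[d(X_t,\hx)\mid q'_{[0,t]}]$, and by the tower property the resulting expected distortion at time $t$ equals $E[c(\pi_t,Q_t)]$ with
\[
c(\pi,Q):=\sum_{b\in\sM'}\ \min_{\hx\in\hat{\sX}}\ \sum_{x\in\sX}\pi(x)\,T(b\mid Q(x))\,d(x,\hx),
\]
which reduces to \eqref{eq_cdef} in the noiseless case $T(b\mid a)=1_{\{a=b\}}$. Consequently, minimizing the finite-horizon average distortion $\frac1T E^{\Pi}_{\pi_0}\bigl[\sum_{t=0}^{T-1}d(X_t,\hX_t)\bigr]$ over admissible zero delay policies is equivalent, after the Witsenhausen reduction, to minimizing $\frac1T E_{\pi_0}\bigl[\sum_{t=0}^{T-1}c(\pi_t,Q_t)\bigr]$ over all policies that select $Q_t$ as a function of the observable history $q'_{[0,t-1]}$. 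This is a finite-horizon Markov decision problem with state space $\P(\sX)$, action space $\Q$, bounded cost $c$, and the weakly continuous kernel $P(d\pi_{t+1}\mid\pi_t,Q_t)$ (weak continuity, boundedness of $c$, and measurability of selectors are routine because $\sX$, $\sM$, $\sM'$, $\hat{\sX}$, and $\Q$ are all finite, exactly as verified in Proposition~\ref{MeasurableSelectionApplies}).

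Finally I would invoke the measurable-selection/backward-induction theory for such controlled Markov chains (Theorem~\ref{thm:bellman} in the Appendix): there is no loss of optimality in restricting to Markov policies, so the optimal policy may be taken in the form $Q_t=\heta_t(\pi_t)$, $q_t=Q_t(X_t)$, which is precisely the asserted Walrand--Varaiya structure. I expect the only genuine subtlety to be the usual sufficient-statistic point: the belief process $\{\pi_t\}$ is policy-dependent through the quantizers appearing in the filter, so one must argue carefully that enlarging the information on which $Q_t$ may depend from $\pi_t$ to the full feedback string $q'_{[0,t-1]}$ cannot reduce the cost. This is delivered by the Bellman recursion — the minimum at stage $t$ depends on $q'_{[0,t-1]}$ only through $\pi_t$ — but it requires stating the MDP with the correct policy-independent transition kernel and confirming that every Witsenhausen-reduced zero delay policy induces an adapted control sequence for that MDP; the remaining ingredients mirror the noiseless development verbatim.
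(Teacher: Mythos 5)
The paper does not prove this result; it is cited directly from Walrand and Varaiya, so there is no in-paper proof to compare against. Your reconstruction is the standard information-state argument that underlies the original result, and the ingredients you assemble---the noisy-channel Bayesian filter, the reduced per-stage cost $c(\pi,Q)$, the policy-independent transition kernel on $\P(\sX)$---are exactly those the paper itself records in Section~\ref{secnoisy}, so the approach is aligned with the paper's framework. The one point worth sharpening is the caveat you raise at the end: after the reduction of Theorem~\ref{witsenhausenTheorem2}, the encoder may condition on the full feedback string $q'_{[0,t-1]}$, and this string can be strictly more informative than the MDP history $(\pi_{[0,t]},Q_{[0,t-1]})$ because the filter map $q'_t\mapsto\pi_{t+1}$ (given $\pi_t$, $Q_t$) need not be injective; hence a verbatim appeal to Theorem~\ref{thm:bellman} with state $\pi_t$ does not, by itself, dominate all Witsenhausen-reduced policies. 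The clean fix, which your last paragraph gestures at, is to carry out the backward induction with $q'_{[0,t-1]}$ as the argument of the value function $V_t$ and verify inductively that $V_t$ and a minimizing selector at each stage depend on $q'_{[0,t-1]}$ only through $\pi_t$, which holds because both the stage cost $c(\pi_t,Q)$ and the conditional law of $q'_t$ (hence of $\pi_{t+1}$) given $(q'_{[0,t-1]},Q)$ factor through $\pi_t$. With that made explicit, the proof is complete and correct.
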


Under a Walrand-Varaiya type policy the filtering equation \eqref{filtre} is
modified as
\[
\pi_{t+1}(x_{t+1})= \frac{\sum_{x_t,q_t} \pi_t(x_t)
  T(q'_t|q_t) P(q_t | \pi_t, x_t) 
  P(x_{t+1}|x_t)}{\sum_{x_t,q_t} \sum_{x_{t+1}} \pi_t(x_t) T(q'_t|q_t) P(q_t| \pi_t, x_t) P(x_{t+1}|x_t)}. 
\]
Thus, as before, given $\pi_t$ and $Q_t$, $\pi_{t+1}$ is conditionally
independent of $(\pi_{[0,t-1]},Q_{[0,t-1]})$ and it follows that $\{\pi_t\}$ can
be viewed as $\P(\sX)$-valued controlled Markov process \cite{HernandezLermaMCP}
with $\Q$-valued control $\{Q_t\}$ and average cost up to time $T-1$ given by
\[
E^{\Pi,\gamma}_{\pi_0}\left[\frac{1}{T}   \sum_{t=0}^{T-1} d(X_t,\hX_t)\right] =  E^{\Pi}_{\pi_0}\left[ \frac{1}{T}\sum_{t=0}^{T-1} c(\pi_t,Q_t)\right].\]

The set of deterministic Markov coding policies $\Pi_W$ and deterministic
stationary Markov policies $\Pi_{WS}$ is defined analogously to
Definition~\ref{WVdef}. 

It can be checked that the properties concerning the continuity of the kernel
and the existence of invariant measures apply identically to the new controlled
Markov state pair $(\pi_t, Q_t)$. Under the assumption that $\{X_t\}$ is
irreducible and aperiodic, the simulation argument also applies identically by
considering the same channel noise realizations for both processes $X'_t$ and
$X''_t$; i.e., in the simulation argument we can compare the performance of the
coding schemes by taking the expectations over the channel noise
realizations. Thus, the finite coupling time argument in Lemma
\ref{sim_arg_lemma} applies to this case as well. The following theorem
compactly summarizes the noisy channel analogues of our results in the previos
sections.

\begin{thm}
\begin{itemize}
\item[]
\item[(i)] For the minimization of the finite horizon average distortion
  (\ref{Cost1}), an optimal solution in $\Pi_W$  exists and 
a noisy channel analog of  Proposition~\ref{MeasurableSelectionApplies} holds.

\item[(ii)] For the minimization of the infinite horizon discounted distortion
  (\ref{cost22}), an optimal solution exists and such a solution is  in
  $\Pi_{WS}$, i.e., a noisy channel analog of Proposition~\ref{thm:dc} holds. 

\item[(iii)] The noisy channel version of Theorem~\ref{mainThm} holds: If
  $\{X_t\}$ is irreducible and aperiodic, there exists a policy in $\Pi_{WS}$
  that minimizes the infinite horizon average distortion
  \eqref{infiniteCost}. Furthermore, the convergence rate result
  \eqref{eq:convrate} holds for this optimal policy.

\item[(iv)]  Under the assumption that $\{X_t\}$ is irreducible and aperiodic, if
  $X_0 \sim \pi^*$, where $\pi^*$ is the invariant 
  probability measure, for any  $\epsilon>0$, there exists $K>0$ and  a finite memory,
  nonstationary, but periodic quantization policy with period less than
  $\frac{K}{\epsilon}$ that achieves $\epsilon$-optimal 
  performance Thus the noisy channel version of
  Theorem~\ref{epsOptimalityTheorem} holds. 
\end{itemize}
\end{thm}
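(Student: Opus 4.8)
The plan is to verify that every structural and analytic ingredient used in Sections~\ref{chap:finite}--\ref{chap:avgcost} survives the passage to a noisy channel with feedback, so that the proofs of Proposition~\ref{MeasurableSelectionApplies}, Proposition~\ref{thm:dc}, Theorem~\ref{mainThm}, and Theorem~\ref{epsOptimalityTheorem} carry over with only notational changes. Three facts are already in hand: (a) by Theorem~\ref{WalrandVaraiyaTheorem1}, any admissible policy can be replaced without loss by one in $\Pi_W$, so that $\{(\pi_t,Q_t)\}$ with $\pi_t=P(dx_t\mid q'_{[0,t-1]})$ is a $\mathcal{P}(\sX)\times\Q$-valued controlled Markov chain governed by the modified filtering equation; (b) the per-stage cost $c(\pi,Q)$ in \eqref{eq_cdef} is unchanged, hence continuous, $\Q$ is finite, and the modified kernel $P(d\pi_{t+1}\mid\pi_t,Q_t)$ --- which now also averages against the channel law $T(\cdot\mid\cdot)$ --- is weakly continuous, so the chain is weak Feller and possesses an invariant probability on $\mathcal{P}(\sX)$; (c) the coupling/simulation argument of Lemma~\ref{sim_arg_lemma} extends, as described below.

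For part~(i), Theorem~\ref{WalrandVaraiyaTheorem1} reduces the finite-horizon problem to $\Pi_W$, and since there are only finitely many $M$-cell quantizers the measurable-selection and dynamic-programming argument behind Proposition~\ref{MeasurableSelectionApplies} applies verbatim, producing an optimal policy in $\Pi_W$ and the associated Bellman recursion. For part~(ii), the chain of inequalities in the proof of Proposition~\ref{thm:dc} goes through word for word: conditions (i)--(iii) of Theorem~\ref{thm:iterative} hold by (a)--(b) above, value iteration converges, and the limiting policy is deterministic and stationary, i.e., lies in $\Pi_{WS}$, and is optimal among all of $\Pi_A$.

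The crux of part~(iii) is the noisy-channel analog of Lemma~\ref{sim_arg_lemma}. Given $\mu_0,\zeta_0$, couple $\{X'_t\}\sim(\mu_0,P)$ and $\{X''_t\}\sim(\zeta_0,P)$ exactly as in that proof, using a joint law $\lambda$ attaining $\rho_1(\mu_0,\zeta_0)$ and the reference-state coupling of \cite[Theorem~1.8.3]{Nor97}; irreducibility and aperiodicity of $P$ make the coupled chain positive recurrent, so the coupling time $\tau$ satisfies $K_1\coloneqq\max_{x,y}E[\tau\mid X_0=x,Y_0=y]<\infty$. We then enlarge the policy class to allow common randomness --- harmless by part~(ii), since discounted-optimal policies are deterministic --- and, as the one new point, draw a single channel-noise realization shared by both schemes. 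A suboptimal randomized encoder for $\{X'_t\}$ observes $\{X'_t\}$, the coupling randomization, and (via feedback) the actual channel outputs; it internally reconstructs $\{X''_t\}$ and emulates the $\zeta_0$-optimal policy $\Pi''$ acting on it. This is consistent because $\Pi''$'s quantizer selection at time $t$ depends on the past only through $\pi_t=P(dx_t\mid q'_{[0,t-1]})$, a quantity the emulating encoder can compute from the feedback it already has, and because the transmitted symbols $q''_t=Q''_t(X''_t)$ and hence --- under the common noise --- the received symbols and reconstructions $\hX''_t$ coincide with those $\Pi''$ produces on $\{X''_t\}$. Since $X'_t=X''_t$ for $t\ge\tau$, we obtain $|J^{\beta}_{\mu_0}-J^{\beta}_{\zeta_0}|\le E[\tau]\|d\|_{\infty}\le K_1\|d\|_{\infty}\rho_1(\mu_0,\zeta_0)$, exactly as in \eqref{eq:cnullbound}. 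With this bound, Lemmas~\ref{mainThm1}, \ref{mainThm2}, and \ref{keyConnection} hold unchanged, the family $\{h_{\beta}\}$ is equicontinuous on $\mathcal{P}(\sX)$, Theorem~\ref{thm3} (the ACOE) yields a minimizing policy in $\Pi_{WS}$, and \eqref{eq:convrate-triplet} gives the $O(1/T)$ bound \eqref{eq:convrate}. Part~(iv) then follows by periodically extending this optimal $\Pi^*$ precisely as in Theorem~\ref{epsOptimalityTheorem}, the period bound $K/\epsilon$ being read off from the convergence rate.

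The main obstacle is the emulation step in part~(iii): one must check that a feedback encoder for $\{X'_t\}$ can faithfully reproduce $\Pi''$ acting on the coupled process $\{X''_t\}$ \emph{over the noisy channel}. This hinges on two observations --- that $\Pi''$'s action at time $t$ is a function of the belief $\pi_t$ alone, which is computable from the channel outputs the emulating encoder receives through feedback, and that fixing a single channel-noise realization for both schemes makes the term-by-term distortion comparison legitimate. The only other (routine but necessary) verification is the weak continuity of the modified filtering kernel, which now integrates against $T(\cdot\mid\cdot)$; since $\sM$ and $\sM'$ are finite this is immediate from the argument in \cite[Lemma~11]{YukLinZeroDelay}.
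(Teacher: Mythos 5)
Your proof is correct and takes essentially the same route as the paper: the paper's argument for this theorem is itself only a brief sketch (it notes that the structural theorems, continuity/invariance properties, and the coupling/simulation argument ``apply identically'' once the filtering kernel is augmented by the channel law and one uses common channel-noise realizations), and your write-up simply expands that sketch, in particular spelling out the emulation step in part~(iii) --- that the emulating encoder can recover $\pi_t$ from feedback so it can reproduce $\Pi''$'s quantizer choices, and that fixing a common noise realization makes the term-by-term distortion comparison valid --- which is exactly what the paper's ``same channel noise realizations'' phrase is gesturing at.
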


\section{Conclusion}
\label{chap:conclu}

Zero delay lossy coding of finite alphabet Markov sources was considered. 
The main result showed that for any irreducible and aperiodic (not necessarily
stationary)  Markov chain 
there exists a stationary and deterministic Markov (Walrand-Varaiya type) policy
that is optimal in the set of zero delay coding policies. 
This result significantly generalizes existing results in
\cite{Witsenhausen}, \cite{WalrandVaraiya}, and \cite{YukLinZeroDelay}.

In addition, it was shown that the distortion of an optimal stationary policy
for time horizon (block length) $T$ converges to the optimal infinite horizon
distortion at a rate $O(1/T)$. As a corollary, the $\epsilon$-optimality of
periodic zero delay codes is established with an explicit bound on the
relationship between $\epsilon$ and the period length.  This result is of
potential practical importance since the code's complexity is directly related to the
length of the period (memory size). Extensions of these results to zero
delay lossy coding over noisy channels with feedback were also
given. 

An interesting open problem is the generalization of the results to continuous
sources such as real or $\R^d$-valued Markov sources. Such a generalization
would be facilitated by an appropriate extension of Lemma \ref{sim_arg_lemma} to
continous alphabets.  Some related results in this direction are available in
\cite{arapostathis2012ergodic}. Another, more challenging open problem of
information theoretic flavor is to find a (preferably) single-letter
characterization of the optimum infinite horizon average distortion of zero
delay coding of Markov sources. As mentioned before, such a characterization is
only known for stationary and memoryless (i.i.d$.$) sources, while for the block
coding problem the distortion rate function gives a (non single-letter)
characterization, and even closed form expressions exist for binary symmetric
Markov sources in a certain range of distortion values \cite{Gra70} as well as
explicit lower and upper bounds \cite{Ber77}.

\appendix[Markov Decision Processes]

\label{sec:controlledmarkovchains}

Let $\sZ$ be a  Borel space (i.e., a Borel subset of  a complete and separable
metric space) and let $\mathcal{P}(\sZ)$ denote the set of all probability
measures on $\sZ$.  

\begin{defn}[Markov Control Model \cite{HernandezLermaMCP}]
A discrete time \emph{Markov control model} (\emph{Markov decision process}) is
a system characterized by the 4-tuple 
\begin{equation*}
(\sZ,\sA, K, c),
\end{equation*}
where
\begin{enumerate}
\item $\sZ$ is the \emph{state space}, the  set of all possible
  states of the system;
\item $\sA$ (a Borel space) is the \emph{control space} (or action space), the
  set of all  controls (actions) $a\in \sA$ that can act 
  on the system;
\item $K=K(\,\cdot\,|z,a)$ is the \emph{transition probability} of the system, a stochastic kernel on
  $\sZ$ given $\sZ\times \sA$, i.e., $K(\,\cdot\,|z,a)$ is a probability measure
  on $\sZ$ for all state-action pairs $(z,a)$, and $K(B|\,\cdot\,\,,\,\cdot\,)$
  is a measurable function from $\sZ\times \sA$ to $[0,1]$ for each Borel set
  $B\subset \sZ$;
\item $c:\sZ \times \sA \to [0,\infty)$ is the \emph{cost per time stage
    function} of the system, a function  $c(x,a)$ of the state and the control.
\end{enumerate}
\end{defn}

Define the \emph{history} spaces $\sH_t$ at time $t\ge 0$ of the Markov control
model by  $\sH_0 \coloneqq \sZ$ and $\sH_t \coloneqq (\sZ\times \sA)^t  
 \times \sZ$.  Thus a specific history  $h_t \in \sH_t$ has
the form $h_t = (z_0,a_0,\ldots,z_{t-1},a_{t-1},z_t)$.

\begin{defn}[Admissible Control Policy \cite{HernandezLermaMCP}]
  An \emph{admissible control policy} $\Pi = \{\alpha_t\}_{t \geq 0}$, also
  called a \emph{randomized control policy} (more simply a \emph{control policy}
  or a \emph{policy}) is a sequence of stochastic kernels on the action space
  $\sA$ given the history $\sH_t$.  The set of all randomized control policies is
  denoted by $\Pi_A$. A \emph{deterministic policy} $\Pi$ is a sequence of
  functions $\{\alpha_t\}_{t \geq 0}$, $\alpha_t : \sH_t \to \sA$, that
  determine the control used at each time stage deterministically, i.e., $a_t =
  \alpha_t(h_t)$. The set of all deterministic policies is denoted
  $\Pi_D$.  Note that $\Pi_D \subset \Pi_A$. A \emph{Markov policy} is a policy
  $\Pi$ such that for each time stage the choice of control only depends on the
  current state $z_t$, i.e.,\ $\Pi = \{\alpha_t\}_{t \geq 0}$ with  $\alpha_t
  : \sZ \to \mathcal{P}(\sA)$.  The set of all Markov policies is denoted by
  $\Pi_M$. The set of deterministic Markov policies is denoted by
  $\Pi_{MD}$. A \emph{stationary policy} is a Markov policy $\Pi =
  \{\alpha_t\}_{t \geq 0}$ such that $\alpha_t = \alpha$ for all $t \geq 0$ for some
   $\alpha : \sZ \to \mathcal{P}(\sA)$.  The set of all stationary policies
  is denoted by $\Pi_S$ and the set of deterministic stationary policies is denoted
  by $\Pi_{SD}$. 
\end{defn}

According to the Ionescu Tulcea theorem (see \cite{HernandezLermaMCP}), the
transition kernel $K$,  an
initial probability distribution $\pi_0$ on $\sZ$,  and a policy $\Pi$ define a unique
probability measure $P_{\pi_0}^{\Pi}$ on
$\sH_{\infty}=(\sX\times\sA)^{\infty}$, the distribution of the state-action
process $\{(Z_t,A_t)\}_{t\ge 0}$. The resulting state process $\{Z_t\}_{t\ge 0}$
is called a \emph{controlled Markov process}. 
The expectation with respect to $P_{\pi_0}^{\Pi}$ is denoted by $E_{\pi_0}^{\Pi}$.
If $\pi_0=\delta_z$, the point mass at $z\in \sZ$, we write $P_{z}^{\Pi}$ and $E_{z}^{\Pi}$ instead of
$P_{\delta_z}^{\Pi}$ and $E_{\delta_z}^{\Pi}$.

In an \emph{optimal control problem}, a performance objective $J$ of the system
is given and the goal is to find the controls that minimize (or maximize) that
objective.  Some common optimal control problems for Markov control models are
the following:

\begin{enumerate}
\item \emph{Finite Horizon Average Cost Problem}: Here the goal is to find policies that minimize the average
cost
\begin{equation}\label{Cost11}
J_{\pi_0}(\Pi,T) \coloneqq E^{\Pi}_{\pi_0}\left[\frac{1}{T} \sum_{t=0}^{T-1} c(Z_t,A_t)\right],
\end{equation}
for some $T \ge 1$.

\item \emph{Infinite Horizon Discounted Cost Problem}:
Here the goal is to find policies that minimize
\begin{equation}
J^{\beta}_{\pi_0}(\Pi) \coloneqq \lim_{T \to \infty} E^{\Pi}_{\pi_0}\left[ \sum_{t=0}^{T-1} \beta^t c(Z_t,A_t)\right],
\end{equation}
for some $\beta \in (0,1)$.

\item \emph{Infinite Horizon Average Cost Problem}:
In the more challenging infinite horizon control problem  the goal is to find
policies that minimize the average cost 
\begin{equation}
J_{\pi_0}(\Pi) \coloneqq \limsup_{T \to \infty} E^{\Pi}_{\pi_0}\left[\frac{1}{T}   \sum_{t=0}^{T-1} c(Z_t,A_t)\right]. \label{infiniteCost1}
\end{equation}
\end{enumerate}

The Markov control model together with the performance objective is called a
\emph{Markov decision process}. 

A common method to solving finite horizon Markov control problems is by
\emph{dynamic programming}, which involves working backwards from the final time
stage to solve for the optimal sequence of controls to use.  The optimality of
this algorithm is guaranteed by Bellman's principle of optimality.

\begin{thm}[Bellman's Principle of Optimality \protect{\cite[Chapter 3.2]{HernandezLermaMCP}}]
\label{thm:bellman}
Given a finite time horizon $T\ge 1$, define a sequence of functions
$J_T,\ldots,J_0$ on 
$\sZ$ recursively such that
\begin{equation*}
J_{T}(z_T) \equiv 0,
\end{equation*}
and for $0\leq t < T$ and $z\in \sZ$,
\begin{equation}
\label{bellmanthmeqn}
J_t(z) \coloneqq \min_{a\in \sA} \left[ 
c(z,a) + \int_{\sZ} J_{t+1}(z')
  K(dz' | z ,a) \right].
\end{equation}
If the $J_t$ are measurable and there exist measurable $f_t: \sZ\to \sA$ 
such that $a=f_t(z)$ achieves the above minimum for all $t=0,\ldots,T-1$, then
the deterministic Markov policy $\Pi \coloneqq (f_0,\ldots,f_{T-1})$ is optimal
with cost $J_{z_0}(\Pi,T) = J_0(z_0)$. 
\end{thm}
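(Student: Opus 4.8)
The plan is to prove optimality of $\Pi:=(f_0,\ldots,f_{T-1})$ by \textbf{backward induction on the time stage}, exploiting the Markov structure of the controlled process supplied by the Ionescu--Tulcea construction. Fix an arbitrary admissible policy $\Pi'=\{\alpha_t\}\in\Pi_A$ and, for each $0\le t\le T$, introduce the cost-to-go functional
\[
W_t^{\Pi'}(h_t):=E^{\Pi'}\!\Bigl[\,\sum_{s=t}^{T-1}c(Z_s,A_s)\,\Big|\,H_t=h_t\Bigr],\qquad h_t\in\sH_t,
\]
which is well defined in $[0,\infty]$ since $c\ge 0$ (and finite throughout, as $c$ is bounded in the paper's setting). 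The statement I would establish by induction is: for every $\Pi'\in\Pi_A$ and every $h_t=(\ldots,z_t)\in\sH_t$,
\[
W_t^{\Pi'}(h_t)\ \ge\ J_t(z_t),
\]
with equality whenever $\Pi'$ acts through the selectors $f_t,f_{t+1},\ldots,f_{T-1}$ from time $t$ onward.

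The base case $t=T$ is immediate, since $W_T^{\Pi'}\equiv 0=J_T$. For the inductive step, assume the claim at $t+1$. Conditioning on $H_t=h_t$, the Ionescu--Tulcea measure $P_{z_0}^{\Pi'}$ disintegrates so that $A_t$ has law $\alpha_t(\,\cdot\mid h_t)$, given $(Z_t,A_t)=(z_t,a)$ the next state $Z_{t+1}$ has law $K(\,\cdot\mid z_t,a)$, and $H_{t+1}=(h_t,a,Z_{t+1})$. Hence by the tower property and the inductive hypothesis,
\[
W_t^{\Pi'}(h_t)=E^{\Pi'}\!\Bigl[c(z_t,A_t)+W_{t+1}^{\Pi'}(H_{t+1})\,\Big|\,H_t=h_t\Bigr]\ \ge\ E^{\Pi'}\!\Bigl[c(z_t,A_t)+\!\int_{\sZ}\!J_{t+1}(z')\,K(dz'\mid z_t,A_t)\,\Big|\,H_t=h_t\Bigr].
\]
For each fixed action $a$ the bracketed quantity is at least $\min_{a'\in\sA}\bigl[c(z_t,a')+\int_{\sZ}J_{t+1}(z')\,K(dz'\mid z_t,a')\bigr]=J_t(z_t)$, so averaging over $A_t\sim\alpha_t(\,\cdot\mid h_t)$ still gives $W_t^{\Pi'}(h_t)\ge J_t(z_t)$. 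If instead $\Pi'$ uses $A_s=f_s(Z_s)$ for all $s\ge t$, the inductive hypothesis upgrades $W_{t+1}^{\Pi'}(H_{t+1})=J_{t+1}(Z_{t+1})$ a.s., and $A_t=f_t(z_t)$ attains the minimum defining $J_t(z_t)$, so both inequalities become equalities, closing the induction.

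Evaluating at $t=0$ with $h_0=z_0$ yields $E_{z_0}^{\Pi'}\bigl[\sum_{t=0}^{T-1}c(Z_t,A_t)\bigr]=W_0^{\Pi'}(z_0)\ge J_0(z_0)$ for every $\Pi'\in\Pi_A$, with equality for $\Pi=(f_0,\ldots,f_{T-1})$; this is exactly the optimality claim, with cumulative cost $J_0(z_0)$ (equivalently $J_{z_0}(\Pi,T)=J_0(z_0)$ under the normalization convention of \eqref{Cost11}). Since the hypotheses already \emph{grant} measurability of the $J_t$ and existence of measurable selectors $f_t$, no measurable-selection argument is needed here; the only point requiring genuine care — and the one I would write out in detail — is the disintegration of $P_{z_0}^{\Pi'}$ underlying the tower-property identity, namely that conditional on $H_t$ the pair $(A_t,Z_{t+1})$ has the product-of-kernels law $\alpha_t(da\mid h_t)\,K(dz'\mid z_t,a)$ and that $W_{t+1}^{\Pi'}(H_{t+1})$ is a legitimate measurable integrand. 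This is precisely what the Ionescu--Tulcea theorem delivers, and once this measure-theoretic bookkeeping is in place the remainder of the argument is the routine induction above.
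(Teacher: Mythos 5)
The paper states this theorem as a citation to \cite[Chapter 3.2]{HernandezLermaMCP} and gives no proof of its own, so there is no in-paper argument to compare against; your backward induction on the conditional cost-to-go functionals $W_t^{\Pi'}$ is precisely the standard dynamic-programming proof found in that reference, and your identification of the Ionescu--Tulcea disintegration (so that, conditionally on $H_t=h_t$, the pair $(A_t,Z_{t+1})$ has law $\alpha_t(da\mid h_t)\,K(dz'\mid z_t,a)$) as the one genuinely measure-theoretic step is exactly right. One minor remark: the recursion \eqref{bellmanthmeqn} produces the \emph{cumulative} expected cost, so under the normalization in \eqref{Cost11} one literally has $J_{z_0}(\Pi,T)=J_0(z_0)/T$ rather than $J_0(z_0)$; you noticed and flagged this, and it is a small notational inconsistency in the paper's statement (contrast Proposition~\ref{MeasurableSelectionApplies}, which inserts the factor $T$ explicitly), not a gap in your argument.
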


Quite general conditions exist under witch the two  assumptions of the above theorem
hold  \cite[Chapter 3.3]{HernandezLermaMCP}.

For the infinite horizon discounted cost Markov control problem, one can also
use an iteration algorithm to obtain an optimal policy.  This approach is
commonly called the \emph{successive approximations} or value iteration  method
\cite[Chapter 4.2]{HernandezLermaMCP}.

A stochastic kernel $K$ on $\mathbb{\sZ}$ given $\sZ\times \sA$ is called weakly
continuous if the function $(a,z)\mapsto \int_{\sZ} v(z') K(dz'|z,a)$ is
continuous whenever $v$ is a \emph{bounded and continuous} real function on
$\sZ\times \sA$. It is called strongly continuous if 
the $(a,z)\mapsto \int_{\sZ} v(z') K(dz'|z,a)$ is continuous whenever $v$ is a
\emph{measurable and bounded} real function on $\sZ\times \sA$. The next
theorem follows from \cite[Chapter 8.5]{HernandezLermaMCP2}. 

\begin{thm} \label{thm:iterative}   Suppose the following conditions hold:
\begin{itemize}
\item[(i)] The one stage cost $c$ is continuous, nonnegative, and bounded;
\item[(ii)]  $\sA$ is compact;
\item[(iii)] the  transition kernel $K$
  is weakly continuous. 
\end{itemize}
Then for any $\beta \in (0,1)$, the pointwise limit
  $J(z)$ as $t\to \infty$, of the sequence defined by
\begin{equation*}
J_t(z) = \min_{a \in \sA} \left[ c(x, a) + \beta \int_{\sZ} J_{t-1}(z')
  K(z'|z,a) \right], 
\quad z \in \sZ,
\end{equation*}
with $J_0(z) \equiv 0$, yields the optimum cost in the
  infinite horizon discounted cost problem (i.e., $\inf_{\Pi\in \Pi_A}
  J_{z}^{\beta}= J(z)$. 
Furthermore, there exists a measurable function $f:\sZ\to \sA$ such that 
\[
J(z) = c(x, f(z)) + \beta \int_{\sZ} J_{t-1}(z')
  K(z'|z,f(z))
\]
and the policy $\Pi=\{f\}$ is an optimal stationary Markov policy. 
\end{thm}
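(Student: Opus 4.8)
The plan is to recognize Theorem~\ref{thm:iterative} as the classical value iteration (successive approximations) result for discounted cost Markov control models and to prove it by the Banach fixed point method combined with a measurable selection argument. Introduce the dynamic programming operator $\Phi$ on the Banach space $C_b(\sZ)$ of bounded continuous real functions on $\sZ$, with the supremum norm $\|\cdot\|_{\infty}$, by
\[
(\Phi v)(z) \coloneqq \min_{a\in \sA}\Bigl[\, c(z,a) + \beta \int_{\sZ} v(z')\, K(dz'|z,a)\Bigr].
\]
The first substantive step is to show that $\Phi$ maps $C_b(\sZ)$ into itself and that the minimum is attained by a measurable selector. For $v\in C_b(\sZ)$, weak continuity of $K$ (condition (iii)) makes $(z,a)\mapsto \int_{\sZ} v\, dK$ continuous and bounded on $\sZ\times \sA$; adding the continuous bounded cost $c$ (condition (i)) keeps the bracketed expression jointly continuous and bounded; and since $\sA$ is compact (condition (ii)), Berge's maximum theorem gives that $z\mapsto (\Phi v)(z)$ is continuous, while a standard measurable selection theorem (under the hypotheses of \cite[Chapter~8.5]{HernandezLermaMCP2}) produces a measurable $f:\sZ\to \sA$ attaining the minimum. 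Boundedness is clear since $0\le c\le \|c\|_{\infty}$; in fact any fixed point $J$ obeys $\|J\|_{\infty}\le \|c\|_{\infty}/(1-\beta)$.

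Next I would verify that $\Phi$ is a $\beta$-contraction on $C_b(\sZ)$: from $|\min_{a} g(a)-\min_{a} h(a)|\le \sup_{a}|g(a)-h(a)|$ and $\bigl|\int_{\sZ}(v-w)\, dK\bigr|\le \|v-w\|_{\infty}$ one gets $\|\Phi v-\Phi w\|_{\infty}\le \beta\|v-w\|_{\infty}$. Since $C_b(\sZ)$ is complete, Banach's fixed point theorem yields a unique fixed point $J\in C_b(\sZ)$ with $\Phi^{t}J_0\to J$ uniformly for every starting $J_0\in C_b(\sZ)$; taking $J_0\equiv 0$ recovers exactly the iteration in the statement (one also sees $J_t\uparrow J$, since $J_1=\Phi 0\ge 0=J_0$ and $\Phi$ is order preserving). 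It then remains to identify $J$ with the optimal discounted cost. For $J(z)\le \inf_{\Pi\in \Pi_A}J^{\beta}_{z}(\Pi)$, iterate the one step inequality $J(z)\le c(z,a)+\beta\int_{\sZ} J\, dK(\cdot|z,a)$ along an arbitrary policy, take expectations, and let the horizon tend to infinity, using $\beta^{t}\to 0$ and boundedness of $J$. For the reverse inequality, use the selector $f$ from the Berge step, for which $J(z)=c(z,f(z))+\beta\int_{\sZ} J\, dK(\cdot|z,f(z))$ holds with equality; iterating this along the stationary policy $\Pi=\{f\}$ gives $J^{\beta}_{z}(\{f\})=J(z)$. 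Hence $\inf_{\Pi\in \Pi_A}J^{\beta}_{z}(\Pi)=J(z)=J^{\beta}_{z}(\{f\})$, so $\{f\}$ is an optimal deterministic stationary Markov policy.

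The main obstacle is the very first step: showing that $\Phi$ preserves continuity and that a measurable minimizing selector exists. This is precisely where compactness of $\sA$ and the \emph{weak} (rather than merely measurable) continuity of $K$ are indispensable, and it is handled via Berge's maximum theorem together with the measurable selection machinery of \cite[Chapter~8.5]{HernandezLermaMCP2}. Once that is in place, the contraction estimate, the convergence of the value iterates, and the two verification inequalities are all routine.
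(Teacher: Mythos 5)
Your proof is correct, and it is the standard contraction-mapping argument (Banach fixed point on $C_b(\sZ)$, Berge's maximum theorem for continuity of $\Phi v$, a measurable selection theorem for the minimizing $f$, and the two verification iterations). The paper does not prove this theorem itself: it simply cites \cite[Chapter~8.5]{HernandezLermaMCP2}, which establishes it by essentially the same route, so you have supplied the details of a result the authors invoke as known. One small note: the fixed-point identity should read $J(z) = c(z,f(z)) + \beta\int_{\sZ} J(z')\,K(dz'|z,f(z))$; the $J_{t-1}$ (and the stray $x$) in the theorem's displayed equation are typos in the paper, and you correctly state the fixed-point equation for $J$ in your argument.
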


Finally, for the infinite horizon average cost Markov control problem, we
give a brief overview of the \emph{average cost optimality equation} (ACOE). 
When the ACOE holds for a deterministic and stationary Markov policy 
$\Pi$, we know $\Pi$ is optimal for the infinite horizon average cost
problem.

\begin{defn}
Let $h$ and $g$  be  measurable real functions on $\sZ$ and let  $f:\sZ\to \sA$ be
measurable. Then $(g,h,f)$ is said to be a canonical triplet if for all $z\in\sZ$, 
\begin{align}
g(z)&=\inf_{a\in \sA} \int_{\sZ} g(z') K(dz'|z,a) \label{eq:can1} \\
g(z) + h(z) &= \inf_{a\in \sA} \left(c(z,a) + \int_{\sZ} h(z') K(dz'|z,a)
              \right)  \label{eq:can2}
\end{align}
and
\begin{align}
g(z) &= \int_{\sZ} g(z') K(dz'|z,f(z))  \label{eq:can3} \\
g(z) + h(z) &= c(z,f(z)) + \int_{\sZ} h(z') K(dz'|z,f(z)). \label{eq:can4}
\end{align}
\end{defn}

Equations \eqref{eq:can1}--\eqref{eq:can4} are called the \emph{canonical
  equations}. In case $g$ is a constant,  $g\equiv g^*\in [0,\infty)$, these
equations reduce to
\begin{align}
g^* + h(z) &= \inf_{a\in \sA} \left(c(z,a) + \int_{\sZ} h(z') K(dz'|z,a)
              \right)  \label{eq:ACOE1} \\
g^* + h(z) &= c(z,f(z)) + \int_{\sZ} h(z') K(dz'|z,f(z)) \label{eq:ACOE2}
\end{align}
and \eqref{eq:ACOE1} is called the \emph{average cost optimality equation}
(ACOE). 

The ACOE is of central importance in the theory of
infinite horizon average cost problems since (as can be shown
\cite[Chapter 5.2]{HernandezLermaMCP}), with the 
additional condition that $\limsup_{T\to \infty} (1/T) E_{z_0}^{\Pi} \big[
h(Z_T)] =0$ for all $z_{0}\in \sZ$ and $\Pi\in \Pi_A$, it implies that
the deterministic and stationary Markov policy $\Pi^*=\{f\}$ is optimal in $\Pi_A$
and $g^*$ is the value function, i.e.,
\[
  g^* = J_{z_0}(\Pi^*) = \inf_{\Pi\in \Pi_A}  J_{z_0}(\Pi). 
\]

Although several general sufficient conditions for the ACOE to hold exist (see,
e.g., Assumptions 4.2.1 and 5.5.1 in \cite{HernandezLermaMCP}), these conditions
are restrictive in our setup since they involve the strong continuity of the
transition kernel. In our results we take $\sZ$ to be the space
probability measures, which makes strong continuity  too strict a condition  in
general \cite{saldi2014near} \cite{FeKaZg14}. More relaxed conditions that involve weak
continuity are available in the literature, see \cite{GoHe95}
\cite{Veg03}. Since for us it is enough to consider compact state and action spaces
and uniformly bounded cost, the following theorem will suffice.  Recall
that
\[
J^{\beta}_{z} =\inf_{\Pi\in \Pi_A} J_z^{\beta}(\Pi). 
\]

\begin{thm}[\protect{\cite[Theorem 3.3]{saldi2014near}}]
\label{thm3}
Suppose conditions (i)--(iii) of Theorem~\ref{thm:iterative} hold and, in
addition,
\begin{itemize}
\item[(iv)] the state space $\sZ$ is compact;
\item[(v)] the family of functions  $\{h_{\beta}: \beta\in (0,1)  \}$, with
\[
h_{\beta}(z) = J^{\beta}_{z} - J^{\beta}_{z_0}
\]
for some fixed $z_0\in \sZ$, is uniformly bounded and equicontinuous. 
\end{itemize}
Then there exist a constant $g^*\ge 0$, a continuous and
bounded function $h:\sZ\to \mathbb{R}$, and a measurable function
$f^*:\sZ\to \sA$ such that $(g^{*},h,f^{*})$ is a canonical triplet that
satisfies the ACOE. Thus the deterministic and stationary Markov policy
$\Pi^*=\{f^{*}\}$ is optimal in $\Pi_A$ and $g^*$ is the value function, i.e.,
\begin{align*}
g^* =  J_{z_0}(\Pi^*) = \inf_{\Pi\in \Pi_A} \limsup_{T \to \infty} E^{\Pi}_{z_0}\left[\frac{1}{T}   \sum_{t=0}^{T-1} c(Z_t,A_t)\right],  
\end{align*}
for all $z_0 \in \sZ $.
\end{thm}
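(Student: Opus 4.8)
\medskip
\noindent The plan is to establish the ACOE by the \emph{vanishing discount} method and then invoke the verification theorem for canonical triplets. By Theorem~\ref{thm:iterative} (whose hypotheses are precisely (i)--(iii)), for each $\beta\in(0,1)$ the discounted value function $V_\beta(z)\coloneqq J^\beta_z$ is finite, bounded above by $\|c\|_\infty/(1-\beta)$, and solves the discounted cost optimality equation
\begin{equation*}
V_\beta(z)=\min_{a\in\sA}\Bigl[c(z,a)+\beta\int_\sZ V_\beta(z')\,K(dz'|z,a)\Bigr],\qquad z\in\sZ.
\end{equation*}
Fixing the reference point $z_0$ of condition (v) and writing $h_\beta(z)\coloneqq V_\beta(z)-V_\beta(z_0)$, I would substitute $V_\beta=h_\beta+V_\beta(z_0)$ into this equation and cancel to obtain the normalized (relative-value) form
\begin{equation*}
(1-\beta)V_\beta(z_0)+h_\beta(z)=\min_{a\in\sA}\Bigl[c(z,a)+\beta\int_\sZ h_\beta(z')\,K(dz'|z,a)\Bigr],\qquad z\in\sZ.
\end{equation*}

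Next I would let $\beta\uparrow1$ along a subsequence. The scalars $(1-\beta)V_\beta(z_0)$ lie in the compact interval $[0,\|c\|_\infty]$, so some sequence $\beta_n\uparrow1$ makes them converge to a limit $g^*\ge0$; by condition (v) the family $\{h_\beta\}$ is uniformly bounded and equicontinuous on the compact space $\sZ$ (condition (iv)), so the Arzel\`a--Ascoli theorem lets me pass to a further subsequence along which $h_{\beta_n}\to h$ uniformly, with $h$ continuous and bounded. To pass to the limit in the normalized equation, note that each $K(\cdot|z,a)$ is a probability measure and $\sup_n\|h_{\beta_n}\|_\infty<\infty$, whence
\begin{equation*}
\sup_{(z,a)}\Bigl|\,\beta_n\!\int_\sZ h_{\beta_n}\,dK-\int_\sZ h\,dK\,\Bigr|\le(1-\beta_n)\sup_n\|h_{\beta_n}\|_\infty+\|h_{\beta_n}-h\|_\infty\longrightarrow0 .
\end{equation*}
Hence $(z,a)\mapsto c(z,a)+\beta_n\int h_{\beta_n}\,dK$ converges uniformly to $(z,a)\mapsto c(z,a)+\int h\,dK$, which is continuous by conditions (i) and (iii) applied to the continuous bounded $h$. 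Uniform convergence of the integrands and compactness of $\sA$ then force $\min_a$ of the former to converge to $\min_a$ of the latter, while the left-hand side tends to $g^*+h(z)$; taking limits thus yields the average cost optimality equation \eqref{eq:ACOE1},
\begin{equation*}
g^*+h(z)=\min_{a\in\sA}\Bigl[c(z,a)+\int_\sZ h(z')\,K(dz'|z,a)\Bigr].
\end{equation*}

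Since $\sA$ is compact and $(z,a)\mapsto c(z,a)+\int h\,dK$ is continuous, a standard measurable selection theorem \cite[Chapter~3.3]{HernandezLermaMCP} provides a measurable $f^*:\sZ\to\sA$ attaining the minimum, so \eqref{eq:ACOE2} holds and $(g^*,h,f^*)$ is a canonical triplet satisfying the ACOE. Finally, $h$ is bounded, so $\tfrac1T\bigl|E^\Pi_{z_0}[h(Z_T)]\bigr|\le\|h\|_\infty/T\to0$ for all $\Pi\in\Pi_A$ and $z_0\in\sZ$; this is exactly the extra transversality hypothesis required by the verification theorem \cite[Chapter~5.2]{HernandezLermaMCP}, which then gives that $\Pi^*=\{f^*\}$ is optimal in $\Pi_A$ with $g^*=J_{z_0}(\Pi^*)=\inf_{\Pi\in\Pi_A}J_{z_0}(\Pi)$ for every $z_0$.

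The step I expect to be the main obstacle is the interchange $\lim_n\min_a=\min_a\lim_n$ in the normalized equation. Because $K$ is only \emph{weakly} (not strongly) continuous, an arbitrary bounded measurable integrand cannot be pushed through the kernel, so weak continuity by itself does not suffice; what rescues the argument is that the Arzel\`a--Ascoli limit $h$ is continuous and the convergence $h_{\beta_n}\to h$ is uniform, which is precisely what is needed to combine with weak continuity of $K$ and compactness of $\sA$. Condition (v)'s equicontinuity is therefore the genuine hypothesis doing the work; in the zero delay coding application it is supplied exactly by Lemma~\ref{sim_arg_lemma} through the uniform bound $|J^\beta_\zeta-J^\beta_{\zeta'}|\le K_1\|d\|_\infty\,\rho_1(\zeta,\zeta')$. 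Everything else --- the bound on $V_\beta$, the measurable selection, and the transversality condition --- is routine.
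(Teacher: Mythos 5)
The paper does not prove this theorem; it is stated as a citation of \cite[Theorem~3.3]{saldi2014near}. Your vanishing-discount argument is the standard proof of such results and is correct: the normalized discounted optimality equation, the Arzel\`a--Ascoli extraction of a uniform limit $h$ using (iv)--(v), the uniform passage to the limit inside $\min_a$, and the verification via boundedness of $h$ are all in order, and your closing remark that weak continuity of $K$ suffices precisely because the limit $h$ is continuous and the convergence $h_{\beta_n}\to h$ is uniform is exactly the point the paper emphasizes in the discussion preceding the theorem.
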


Recall the definition
\[
J_{\pi_0}(\Pi,T) \coloneqq E^{\Pi}_{\pi_0}\left[\frac{1}{T} \sum_{t=0}^{T-1} c(Z_t,A_t)\right].
\]
For the canonical triplet  $(\rho^{*},h,f^{*})$ in the preceding theorem,
\cite[p.~80]{HernandezLermaMCP} shows that for all $z_0\in \sZ$ and $T\ge 1$,
\begin{equation}
\label{eq:kbound}
J_{z_0}(\Pi,T) = g^* + \frac{1}{T} \bigg(  h(z_0) - E_{z_0}^{\Pi^*} h(Z_T) \bigg).
\end{equation}
Also, the function $h$ in Theorem~\ref{thm3} is the pointwise limit of the
sequence $\{h_{\beta_n}(z)\}$ along some sequence of discount factors
$\{\beta_n\}$ such that $\lim_{n\to \infty} \beta_n= 1$. Thus if
$\{h_{\beta_n}(z)\}$ is uniformly bounded, say  
$|h_{\beta}(z)| \leq K/2$  for all $z\in \sZ$ and $\beta\in (0,1)$, then
$|h(z)|\le K/2$ for all $z$,  and so \eqref{eq:kbound} implies 
\begin{equation}
\label{eq:convrate-triplet}  
J_{z_0}(\Pi^*,T) - g^* = J_{z_0}(\Pi^*,T)- J_{z_0}(\Pi^*) \le \frac{K}{T}
\end{equation}
for all $T\ge 1$.

For further details on controlled Markov processes, see \cite{HernandezLermaMCP}.

%
%

\bibliographystyle{ieeetr}

\end{document}